\documentclass[11pt,authoryear]{elsarticle}
\journal{Econometrics and Statistics}

\usepackage{fullpage}
\usepackage{amsmath, amsthm, amssymb}
\usepackage{bbm}
\usepackage{mathtools}
\usepackage{graphicx}
\usepackage{xcolor}
\usepackage{enumerate}
\usepackage{natbib}
\usepackage{subfig}
\usepackage{multirow}
\usepackage{float}
\usepackage{tikz}
\usepackage{longtable}
\usepackage{indentfirst}
\usepackage{setspace}
\usepackage{xurl}
\RequirePackage[colorlinks = true, linkcolor = blue, citecolor=blue,urlcolor=blue]{hyperref}

\newtheorem{lemma}{Lemma}
\newtheorem{cor}{Corollary}
\newtheorem{theorem}{Theorem}

\newtheorem{prop}{Proposition}

\DeclareMathOperator{\MSE}{MSE}
\DeclareMathOperator{\RCV}{RCV}
\DeclareMathOperator{\MAD}{MAD}

\onehalfspacing
\begin{document}
\begin{frontmatter}
\title{Robust thin-plate splines for multivariate spatial smoothing}
\author{Ioannis Kalogridis}
\ead{ioannis.kalogridis@kuleuven.be}
\address{Department of Mathematics, KU Leuven}

\begin{abstract}
A novel family of multivariate robust smoothers based on the thin-plate (Sobolev) penalty that is particularly suitable for the analysis of spatial data is proposed. The proposed family of estimators can be expediently computed even in high dimensions, is invariant with respect to rigid transformations of the coordinate axes and can be shown to possess optimal theoretical properties under mild assumptions. The competitive performance of the proposed thin-plate spline estimators relative to their non-robust counterpart is illustrated in a simulation study and a real data example involving two-dimensional geographical data on ozone concentration.

\end{abstract}

\begin{keyword}
Robustness, spatial data, thin-plate splines, asymptotics. \\
{MSC 2020}:  62G08, 62G35, G2H11, 62G20.
\end{keyword}

\end{frontmatter}

\section{Introduction}

Consider the problem of estimating the regression function $f_0: \mathbbm{R}^d \to \mathbbm{R}$ from $n$ observations $(\mathbf{x}_i, Y_i) \in \mathbbm{R}^d \times \mathbbm{R},\ i = 1, \ldots, n,$ following the model
\begin{align}
Y_i = f_0(\mathbf{x}_i) + \epsilon_i, \quad (i=1, \ldots, n),
\end{align}
where the $\epsilon_i$ are random errors that are commonly assumed to be independent and identically distributed (i.i.d.) with mean zero and finite variance, but we will be able to considerably relax these assumptions over the course of this paper. Models of this general type arise naturally throughout the sciences, as very often empirical data cast doubt on parametric regression assumptions \citep[Chapter 5]{Wood:2017}.

A popular method of estimation of $f_0$ that is expounded by \citet{Wahba:1990} and \citet{Green:1994} involves restricting $f_0$ to the multivariate Sobolev space of functions of order $m$, $\mathcal{H}^{m}(\mathbb{R}^d)$, i.e., the space of all functions whose partial derivatives of total order $m$ for some $m \in \mathbbm{N}_{+}$ are in $\mathcal{L}^2(\mathbbm{R}^d)$. Mathematically, the space $\mathcal{H}^{m}(\mathbb{R}^d)$ is defined as
\begin{align*}
\mathcal{H}^{m}(\mathbb{R}^d) = \left\{ f:\mathbbm{R}^d \to \mathbbm{R}, \  \frac{\partial^m f(x_1, \ldots, x_d)}{\partial x_{1}^{m_1} \ldots \partial x_{d}^{m_d}} \ \text{exists for all} \ m_1 + \ldots + m_d = m \  \text{and} \  I_m^2(f) < \infty \right\},
\end{align*}
where the semi-norm $I_{m}:\mathcal{H}^{m}(\mathbb{R}^d) \to \mathbbm{R}_{+}$ is given by
\begin{align}
\label{eq:2}
I_{m}^2(f) = \sum_{m_1 + \ldots + m_d = m} \binom{m}{m_1, \ldots, m_d} \int_{\mathbbm{R}} \ldots \int_{\mathbbm{R}} \left( \frac{\partial^m f(x_1, \ldots, x_d)}{\partial x_{1}^{m_1} \ldots \partial x_{d}^{m_d}}  \right)^2 d x_1 \ldots d x_d.
\end{align}
Classical least squares thin-plate spline estimators are defined as minimizers of 
\begin{align*}
\frac{1}{2n}\sum_{i=1}^n \left(Y_i- f(\mathbf{x}_i)\right)^2 + \lambda I_m^2(f),
\end{align*}
over $\mathcal{H}^{m}(\mathbb{R}^d)$, for some $\lambda>0$ that governs the trade-off between smoothness and goodness of fit to the data. It is easy to see that for $d=1$ the penalty $I_{m}^2(f)$ reduces to $\int_{\mathbbm{R}} |f^{(m)}(x)|^2 dx$ and with standard arguments this may be simplified further to $\int_{\min{x}_i}^{\max{x}_i} |f^{(m)}(x)|^2 dx$, which gives rise to classical smoothing spline estimators, see, e.g., \citet{Wahba:1990}. Such simplifications are not valid for $d>1$, but the problem still admits an elegant solution provided only that $2m > d$. In fact, for $2m>d$, the solution to this variational problem may be written in terms of $n+M$ radial and polynomial basis functions with $M = \binom{m+d-1}{d}$, see \citet[p. 216]{Wood:2017} for more details. The resulting least squares thin-plate spline can be shown to converge at a fast rate under the aforementioned assumptions on the error term \citep{Cox:1984, G:2010} with the result that least squares thin-plate estimators combine computational efficiency with good theoretical properties.

An important drawback of least squares based estimators is their lack of resistance towards atypical observations. That is, these estimators tend to be overly attracted towards observations that do not follow the bulk of the data, which often leads to poor explanatory or predictive performance. In order to remedy this deficiency, this paper introduces and studies a family of generalized (M-type) thin-plate spline estimators defined as minimizers of
\begin{align}
\label{eq:3}
L_n(f) = \frac{1}{n}\sum_{i=1}^n \rho(Y_i - f(\mathbf{x}_i)) + \lambda I_m^2(f),
\end{align}
over $\mathcal{H}^{m}(\mathbb{R}^d)$, for some convex loss function $\rho:\mathbbm{R} \to \mathbbm{R}_{+}$. The square loss $\rho(x) = x^2/2$ is a typical example of such a loss function, but the generality of the above formulation also permits loss functions that increase less sharply as $|x| \to \infty$ thereby providing better protection against outliers. Notable examples include Huber, quantile and $L^q$ loss functions. As we discuss below, a  minimizer of \eqref{eq:3} in $\mathcal{H}^{m}(\mathbbm{R}^d)$ exists under fairly general conditions and, as a result, to identify this minimizer it suffices to restrict attention to an $(n+M)$-dimensional subspace spanned by radial and polynomial functions. Furthermore, we show that, for well-chosen loss functions, optimal rates of convergence may be attained without placing any moment conditions on the errors, thereby allowing for very heavy tailed error distributions within our theoretical analysis.

Thin-plate splines possess a number of notable advantages over their more popular tensor product counterparts. First, the objective function in \eqref{eq:3} only involves one smoothing parameter irrespective of the dimension of the predictor space. By contrast, tensor product penalties require as many smoothing parameters as the number of predictors. Since smoothing parameters are usually selected from the data, tensor product smoothers entail a considerable computational burden which becomes prohibitive for higher dimensions. Another attractive property of thin-plate splines is the invariance of the penalty $I_m^2(f)$ to rotations or reflections of the coordinate axes. This fact implies that thin-plate splines are ideal for spatial/geographic smoothing as in these cases the amount of smoothing does not depend on which axes represent the relative positions, e.g., latitude and longitude.  To these advantages we may add that tensor product smooths often rely on the subjective choice of the number of knots and their position, whereas with thin-plate splines the user is absolved from this responsibility, as both of these aspects emerge naturally from the mathematical problem that thin-plate splines solve.

In view of these advantages it comes as a surprise that thin-plate splines, with the exception of univariate splines with $d=1$, have not received enough attention in the literature. In fact, both available treatments of thin-plate spline estimators of arbitrary dimensions \citep{Cox:1984, G:2010} concern the narrow class of least squares thin-plate estimators and as a result the larger and more versatile class of M-type thin-plate estimators has been overlooked. To address this gap in the literature, the present paper establishes optimal rates of convergence for a wide variety of thin-plate estimators under a mild set of conditions considerably extending our theoretical understanding. Furthermore, while the main emphasis of this work is on outlier-resistant loss functions, we also demonstrate how our methodology can be used to improve upon the results of \citet{G:2010} by sharpening the rate of convergence obtained by these authors for the least squares thin-plate estimator. 

\section{Main results: existence of solutions and rates of convergence}

We begin our study of generalized M-type thin-plate estimators by providing sufficient conditions for the existence of a minimizer in $\mathcal{H}^{m}(\mathbbm{R}^d)$ for the objective function $L_n(f)$, as given in \eqref{eq:3}. For the special least squares case with $\rho(x) = x^2/2$ such conditions have already been presented in the literature \citep[see, e.g., p. 31,][]{Wahba:1990} and involve the uniqueness of the least squares estimator on the null-space of the penalty functional $I_m$. This null-space is $M$-dimensional with $M =  \binom{m+d-1}{d}$  and consists of all polynomials of total degree at most $m$. Let $\phi_1, \ldots, \phi_M$ denote any basis for this space of polynomials. Proposition~\ref{prop:1} below shows that the least squares requirement for existence generalizes nicely to arbitrary convex losses.
\begin{prop}
\label{prop:1}
Assume that $\rho$ is a convex loss function, $2m>d$ and the covariates $\mathbf{x}_i \in \mathbb{R}^d$ are such that the corresponding unpenalized M-estimator on $\phi_1, \ldots, \phi_M$ restricted to the $\mathbf{x}_i$ is unique. Then, $L_n(f)$ has a minimizer in $\mathcal{H}^{m}(\mathbbm{R}^d)$.
\end{prop}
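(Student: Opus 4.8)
The plan is to exploit the well-known representer structure of the penalty $I_m$ to reduce an infinite-dimensional minimization to a finite-dimensional one, and then use coercivity plus continuity to get existence there. First I would recall (from the theory of thin-plate splines, e.g.\ \citet{Wahba:1990}, \citet{Green:1994}, which applies whenever $2m>d$) that $\mathcal{H}^m(\mathbb{R}^d)$, equipped with a suitable inner product making $I_m^2$ the squared semi-norm, is a reproducing kernel Hilbert space modulo the $M$-dimensional null space $\mathcal{P}$ of polynomials of total degree $<m$ spanned by $\phi_1,\dots,\phi_M$; point evaluation at each $\mathbf{x}_i$ is continuous. Since $L_n(f)$ depends on $f$ only through the $n$ values $f(\mathbf{x}_1),\dots,f(\mathbf{x}_n)$ and through $I_m^2(f)$, the standard orthogonal-decomposition (representer) argument shows that for any candidate $f$ one can replace it, without increasing either term of $L_n$, by its projection onto the finite-dimensional space $V = \mathcal{P} \oplus \mathrm{span}\{\eta(\cdot - \mathbf{x}_1),\dots,\eta(\cdot-\mathbf{x}_n)\}$, where $\eta$ is the thin-plate Green's function; the penalty term is not increased because the discarded component is $I_m$-orthogonal to $V$, and the loss term is unchanged because that component vanishes at every $\mathbf{x}_i$. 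Hence $\inf_{\mathcal{H}^m} L_n = \inf_{V} L_n$, and it suffices to produce a minimizer over the finite-dimensional space $V$.

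Next I would parametrize $f \in V$ as $f = \sum_{j=1}^M d_j \phi_j + \sum_{i=1}^n c_i \eta(\cdot - \mathbf{x}_i)$, write the resulting objective as a function $g(\mathbf{c},\mathbf{d})$ on $\mathbb{R}^{n+M}$, note it is convex (sum of a convex loss composed with an affine map and the convex quadratic $\lambda \mathbf{c}^\top E \mathbf{c}$, with $E$ the thin-plate kernel matrix, which is positive semi-definite on the relevant subspace), and invoke the standard fact that a convex, continuous, coercive function on $\mathbb{R}^{n+M}$ attains its minimum. Continuity is immediate. The real content is coercivity, and this is where the hypothesis on the covariates enters: I would show $g(\mathbf{c},\mathbf{d}) \to \infty$ as $\|(\mathbf{c},\mathbf{d})\| \to \infty$. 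Along directions where the penalty term $\lambda \mathbf{c}^\top E \mathbf{c}$ grows, coercivity is clear; the delicate directions are those in the null space of the penalty, i.e.\ $\mathbf{c}=0$ with $\|\mathbf{d}\|\to\infty$, along which $f$ ranges over polynomials in $\mathcal{P}$ and the objective reduces to $\tfrac1n\sum_i \rho(Y_i - p(\mathbf{x}_i))$ with $p = \sum_j d_j\phi_j$. By the assumed uniqueness of the unpenalized M-estimator of $p$ on $\phi_1,\dots,\phi_M$ restricted to $\{\mathbf{x}_i\}$, the convex function $\mathbf{d} \mapsto \tfrac1n\sum_i \rho(Y_i - \sum_j d_j\phi_j(\mathbf{x}_i))$ has a unique minimizer; a convex function on $\mathbb{R}^M$ with a unique minimizer is coercive (otherwise it would be bounded and hence constant along some ray, forcing a continuum of minimizers by convexity — contradiction). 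A short compactness/homogeneity argument then upgrades this to coercivity of the full $g$: on the unit sphere of $\mathbb{R}^{n+M}$ split into the part where $\mathbf{c}\neq 0$ (penalty bounded below by a positive constant times $\|\mathbf{c}\|^2$ after quotienting the kernel of $E$) and a neighborhood of $\{\mathbf{c}=0\}$ (where the previous polynomial coercivity, together with lower-semicontinuity of $\rho\ge 0$, gives growth); I would make this precise by a Lemma-style argument that the sum of a function coercive in one block of variables and a nonnegative function coercive in the complementary block is coercive in all variables jointly, after checking the directions mixing the two blocks.

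I expect the main obstacle to be exactly this coercivity verification in the direction of the penalty null space: one must convert the qualitative hypothesis ``the unpenalized polynomial M-estimator is unique'' into the quantitative statement needed to rule out minimizing sequences escaping to infinity. The cleanest route is the convexity fact cited above (unique minimizer $\Rightarrow$ coercive, for finite-dimensional convex functions), which isolates all the work into a clean lemma and keeps the main argument short; an alternative, if one wants to avoid even that, is to assume in addition that $\rho$ has a coercive behavior (e.g.\ $\rho(x)\to\infty$), but that is stronger than what the proposition claims, so I would stick with the convexity argument. Everything else — the representer reduction, convexity and continuity of $g$, and the final appeal to Weierstrass — is routine.
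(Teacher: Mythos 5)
Your proof is correct, but it takes a genuinely different route from the paper. The paper's proof never leaves the infinite-dimensional space: it invokes Theorem 3.2 of \citet{Cox:1985} for penalized M-estimation with a squared semi-norm penalty having a finite-dimensional null space, so that the only thing to verify is convexity and weak lower semicontinuity of $f \mapsto n^{-1}\sum_i \rho(Y_i - f(\mathbf{x}_i))$; lower semicontinuity follows from the Sobolev embedding (strong $\mathcal{H}^m$-convergence implies convergence of the point evaluations) and is upgraded to weak lower semicontinuity via convexity and Hahn--Banach, while the uniqueness-on-the-null-space hypothesis is simply passed through as a condition of the cited theorem. You instead first reduce to the finite-dimensional representer space and then run a direct coercivity-plus-Weierstrass argument there; note that this reverses the paper's logical order, since the paper derives the finite-dimensional representation in Section~\ref{sec:3} \emph{from} Proposition~\ref{prop:1}, whereas your representer step only needs equality of infima and so is not circular. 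What your route buys is self-containedness (no reliance on an unpublished technical report) and transparency about where the uniqueness hypothesis actually bites: your observation that a convex function on $\mathbb{R}^M$ with a unique minimizer is coercive is exactly the right mechanism, and your proof of it (an unbounded sublevel set of a closed convex function contains a ray, along which convexity plus boundedness forces constancy, contradicting uniqueness) is sound. The one place that needs more care than your sketch suggests is the joint coercivity: ``coercive in $\mathbf{c}$ plus nonnegative and coercive in $\mathbf{d}$'' does not imply joint coercivity for general functions, and moreover the $\mathbf{d}$-coercivity must hold for the loss evaluated at the shifted responses $Y_i - (\mathbf{\Omega}\boldsymbol{\gamma})_i$, not just at $Y_i$. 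Both issues are resolved cleanly by noting that the recession function of a finite convex function is independent of bounded shifts of the data and of the base point, so that the recession function of the full objective is strictly positive in every nonzero admissible direction $(\mathbf{u}_c,\mathbf{u}_d)$ with $\boldsymbol{\Phi}^{\top}\mathbf{u}_c = \mathbf{0}$; spelling this out would make your argument complete. The price of your route is precisely this extra finite-dimensional bookkeeping, plus the standard but nontrivial semi-Hilbert/conditionally-positive-definite machinery needed to justify the representer decomposition when $2m>d$.
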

The uniqueness requirement of Proposition~\ref{prop:1} is very mild. For example, for $d = 2$ and $m=2$ we may take $\phi_1(x_1, x_2) = 1$, $\phi_2(x_1, x_2) = x_1$ and $\phi_3(x_1, x_2) = x_2$, so that if the estimated plane is unique then the existence of a minimizer is guaranteed. The condition $2m>d$ is satisfied for all $m \geq 1$ when $d=1$, i.e., univariate data, but precludes small values of $m$ in higher dimensions. Unfortunately, this condition cannot be weakened as it is both necessary and sufficient for $\mathcal{H}^m(\mathbbm{R}^d)$ to be a reproducing kernel Hilbert space and consequently for point evaluation maps $\mathcal{H}^m(\mathbbm{R}^d) \to \mathbbm{R} : f \mapsto f(\mathbf{x}), \  \mathbf{x} \in \mathbbm{R}^d$, to be well defined. Proposition~\ref{prop:1} is of primary importance not only as a first step towards the theoretical analysis of M-type thin-plate estimators but also for their efficient computation; Section~\ref{sec:3} provides the details. 

Having established that the problem is well-defined, we now investigate the rate of convergence of the minimizer of $L_n(f)$ in $\mathcal{H}^{m}(\mathcal{R}^d)$, which we denote by $\widehat{f}_n$, to the true function $f_0$. The distance metric to be used is the $\mathcal{L}^2(Q_n)$-distance given by
\begin{align*}
\|f-g\|_{\mathcal{L}^2(Q_n)} = \left\{ \int_{\mathbbm{R}^d} |f(\mathbf{x})-g(\mathbf{x})|^2 d Q_n(\mathbf{x}) \right\}^{1/2}
\end{align*}
where $Q_n$ is the empirical measure placing mass $n^{-1}$ at each $\mathbf{x}_i \in \mathbbm{R}^d$. As we shall see, under very mild conditions on the covariates, rates of convergence in terms of the empirical norm $\| \cdot \|_{\mathcal{L}^2(Q_n)}$ also translate to rates of convergence in the classical $\mathcal{L}^2$-distance. It is worth noting that, while here and in the sequel we treat the covariates as non-random, all subsequent results also hold for random covariates provided that we condition on them. Our theoretical development relies on the following regularity conditions.

\begin{enumerate}
\item[A1.] The loss function $\rho: \mathbbm{R} \to \mathbb{R}_{+}$ is convex and satisfies a Lipschitz condition, i.e., there exists a $C>0$ such that
\begin{align*}
|\rho(x)-\rho(y)| \leq C|x-y|, \quad \forall (x,y) \in \mathbbm{R}^2.
\end{align*}
\item[A2.] The errors $\epsilon_i, i = 1, \ldots, n$ are independent random variables and there exists a constant $\kappa >0$ such that for all $|t| \leq \kappa$,
\begin{align*}
\inf_{n} \min_{1 \leq i \leq n} \mathbb{E}\{ \rho(\epsilon_i+t) - \rho(\epsilon_i) \} \geq \kappa t^2
\end{align*}
\item[A3.] The covariates $\mathbf{x}_i, \ i= 1, \ldots, n$ are contained in a bounded open set $\mathcal{O} \subset \mathbb{R}^d$ whose boundary, $\partial \mathcal{O}$, satisfies the uniform cone condition of \citet[p. 83]{Adams:2003}. That is, there exists a locally finite open cover $\{U_j\}$ of $\partial \mathcal{O}$ and a corresponding sequence of finite cones $\{C_j\}$, each congruent to some fixed cone $C$, such that
\begin{itemize}
\item[(i)] There exists an $M<\infty$ such that every $U_j$ has diameter less than $M$.
\item[(ii)] $\{ \mathbf{x} \in \mathcal{O}: \inf_{\mathbf{y} \in \partial\mathcal{O}} \| \mathbf{x}-\mathbf{y} \| < \delta \} \subset \bigcup_{j=1}^{\infty} U_j$ for some $\delta>0$.
\item[(iii)] $Q_j \equiv \bigcup_{\mathbf{x} \in \mathcal{O} \cap U_j}(\mathbf{x} + C_j) \subset \mathcal{O} $ for every $j$.
\item[(iv)] For some finite $R$, every collection of $R+1$ of the sets $Q_j$ has an empty intersection.
\end{itemize}
\item[A4.] Define the quantities
\begin{align*}
h_{\max, n} & = \sup_{\mathbf{x} \in \mathcal{O}} \min_{1 \leq i \leq n} \|\mathbf{x}-\mathbf{x}_i\| 
\\ 
h_{\min, n} & = \min_{i \neq j} \| \mathbf{x}_i - \mathbf{x}_j\|.
\end{align*}
For all large $n$, there exist finite positive constants $B_1, B_2$ such that $h_{max,n} \leq B_1$ and $h_{\max, n}/h_{\min, n} \leq B_2$.
\end{enumerate}

Assumption A1 is very general and a large number of loss functions fulfils these conditions. The convexity requirement implies that $\rho$ is minimally continuous, but, importantly, differentiability is not needed for Theorem~\ref{thm:1} below to hold. Thus, this assumption also covers non-smooth loss functions, such as the quantile loss $\rho_{\alpha}(x)= x(\alpha - \mathcal{I}(x<0)),\ \alpha \in (0,1)$. Assumption A2 requires the local quadratic behaviour of $m_i(t) : = \mathbb{E}\{ \rho(\epsilon_i + t) \}$, $i= 1, \ldots, n$ about zero. Assumptions of this type have been extensively used in the asymptotics of M-estimators  \citep[see, e.g.,][Theorem 3.2.5, p. 289]{VDV:1996}. It is similarly a weak condition that is satisfied quite generally. To see this, observe that by Fubini's theorem
\begin{align*}
m_i(t) - m_i(0) = \int_{0}^t \mathbb{E}\{\rho^{\prime}(\epsilon_i+x)\} dx,
\end{align*}
where $\rho^{\prime}$ is any subgradient of $\rho$, so that all examples of loss functions given by \citet{Kalogridis:2021} also satisfy our A2 under the conditions given by that author. Our assumptions do not entail identically distributed errors, as in our experience this is too strong of an assumption for many practical settings. 

Conditions A3 and A4 concerning the design points (knots) and their positions have been previously used in the asymptotics of least squares thin-plate spline estimators, see \citet{Utr:1988}. The uniform cone condition precludes sets with very irregular boundaries, but is satisfied quite generally otherwise; it is satisfied, e.g., by balls and rectangles. Condition A4 is also very modest, as it essentially requires the design points to be distinct and dense within the domain of interest, at least for large $n$. Both of these requirements follow from the fact that the ratio $h_{\max, n}/h_{\min, n}$ remains bounded for all large $n$, as if either the observations are not unique or dense enough as $n \to \infty$, $h_{\max, n}/h_{\min, n}$ could become unbounded.

With these assumptions we can now state our first asymptotic result. It is worth noting that in Theorem~\ref{thm:1} below we treat the smoothing parameter $\lambda$ as a random variable possibly depending on our sample $(\mathbf{x}_1, Y_1), \ldots, (\mathbf{x}_n, Y_n)$. We place no restrictions on the way that $\lambda$ may depend on the sample and instead merely require that $\lambda$ is a measurable function of our sample so that $\lambda$ is a properly defined random variable. Our treatment of $\lambda$ needs to be contrasted with the treatment of the smoothing parameter by other authors, e.g., \citet{Cox:1984, Utr:1988}, who regard it as a deterministic sequence. In our view, our treatment constitutes an important extension of existing results as $\lambda$ is most often selected from the data and is thus random rather than fixed, see, e.g., the data-driven method of selecting $\lambda$ presented in Section~\ref{sec:3}.

\begin{theorem}
\label{thm:1}
Assume that the conditions of Proposition~\ref{prop:1} are met, A1--A4 hold and further that $\lambda = O_P(n^{-2m/(2m+d)})$ as well as $\lambda^{-1} = O_P(n^{2m/(2m+d)})$. Then, there exists a sequence, $\widehat{f}_n$, of M-type thin-plate splines minimizing \eqref{eq:3} such that
\begin{align*}
\|\widehat{f}_n-f_0\|_{\mathcal{L}^2(Q_n)}^2 = O_P(n^{-2m/(2m+d)}) \quad \text{and} \quad I_m(\widehat{f}_n) = O_P(1),
\end{align*}
as $n \to \infty$.
\end{theorem}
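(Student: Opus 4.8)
The plan is to prove Theorem~\ref{thm:1} by combining the convexity of $L_n$ with a localized empirical-process argument of the type used in the analysis of penalized M-estimators (\citealp{Cox:1984, Kalogridis:2021}, and, for the least-squares prototype, \citealp{G:2010}). By Proposition~\ref{prop:1} a minimizer $\widehat f_n$ of \eqref{eq:3} exists; set $\widehat g_n = \widehat f_n - f_0$ and $\epsilon_i = Y_i - f_0(\mathbf x_i)$. Since $f \mapsto L_n(f)$ is convex, for every $t \in (0,1]$ we have $L_n(f_0 + t\widehat g_n) \le (1-t)L_n(f_0) + tL_n(\widehat f_n) \le L_n(f_0)$, so the basic inequality
\begin{align*}
\frac1n\sum_{i=1}^n\bigl\{\rho(\epsilon_i - t\widehat g_n(\mathbf x_i)) - \rho(\epsilon_i)\bigr\} + \lambda I_m^2(f_0 + t\widehat g_n) \le \lambda I_m^2(f_0)
\end{align*}
holds simultaneously for all $t \in (0,1]$. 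Rescaling by a suitable $t = t_n \le 1$ will let me confine attention to functions that are uniformly bounded --- so that the entropy estimates below apply --- and the rate for $\widehat g_n$ itself is then recovered by showing $t_n = 1$ eventually.

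Because the conclusion is an $O_P$ statement, I would work on the event $\{c_1 a_n \le \lambda \le c_2 a_n\}$ with $a_n = n^{-2m/(2m+d)}$, whose probability exceeds $1-\eta$ once $c_1$ is small and $c_2$ is large (by the hypotheses on $\lambda$ and $\lambda^{-1}$); on this event $\lambda$ behaves, for all purposes, like a deterministic sequence of the right order, and the empirical-process bounds below are uniform over the unknown function and hence unaffected by the way $\lambda$ depends on the data. Subtracting expectations and invoking A2, which gives
\begin{align*}
\frac1n\sum_{i=1}^n \mathbb E\bigl\{\rho(\epsilon_i - g(\mathbf x_i)) - \rho(\epsilon_i)\bigr\} \ge \kappa\,\|g\|_{\mathcal L^2(Q_n)}^2 \qquad \text{whenever } \|g\|_\infty \le \kappa,
\end{align*}
reduces the problem to bounding the centred empirical process $\nu_n(g) = n^{-1}\sum_{i=1}^n[\{\rho(\epsilon_i - g(\mathbf x_i)) - \rho(\epsilon_i)\} - \mathbb E\{\cdots\}]$ uniformly over the relevant classes of $g$.

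Two classical facts about $\mathcal H^m(\mathcal O)$ for $2m>d$ do the analytic work. First, the Sobolev embedding $\mathcal H^m(\mathcal O)\hookrightarrow C(\overline{\mathcal O})$ together with the cone condition A3 and the quasi-uniformity of the design forced by A4 yields a Poincaré-type equivalence $\|g\|_\infty \lesssim I_m(g) + \|g\|_{\mathcal L^2(Q_n)}$ (as exploited by \citealp{Utr:1988, G:2010}); this simultaneously supplies the uniform bound needed for the rescaling step and tames the polynomial null-space component of $g$ on which the seminorm $I_m$ vanishes. Second, the ball $\{g : I_m(g) \le 1,\ \|g\|_\infty \le 1\}$ has metric entropy $\log N(\delta, \cdot , \|\cdot\|_{\mathcal L^2(Q_n)}) \lesssim \delta^{-d/m}$ with $d/m < 2$ (the Birman--Solomyak estimate for Sobolev embeddings, transferred to the empirical norm using A4). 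Applying the contraction inequality (licensed by the Lipschitz bound A1) followed by Dudley's entropy bound gives, on the class $\{g : I_m(g) \le R,\ \|g\|_{\mathcal L^2(Q_n)} \le \delta\}$, a Rademacher complexity of order $n^{-1/2}\delta^{1-d/(2m)}R^{d/(2m)}$; a peeling device over the dyadic ranges of $I_m(\widehat g_n)$ and $\|\widehat g_n\|_{\mathcal L^2(Q_n)}$ together with Talagrand's concentration inequality then upgrades this to the uniform deviation bound
\begin{align*}
|\nu_n(g)| = O_P\!\Bigl(n^{-1/2}\,\|g\|_{\mathcal L^2(Q_n)}^{\,1-d/(2m)}\bigl(1 \vee I_m(g)\bigr)^{d/(2m)}\Bigr),
\end{align*}
valid uniformly over $g$.

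Finally, I would feed this bound and the A2 lower bound back into the basic inequality. Writing $r = \|\widehat g_n\|_{\mathcal L^2(Q_n)}$ and $J = I_m(\widehat g_n)$, and using $I_m(f_0 + t\widehat g_n) \ge t J - I_m(f_0)$ and $\lambda \asymp a_n$, one arrives at an inequality of the schematic form
\begin{align*}
\kappa\, r^2 + a_n\,(tJ - I_m(f_0))_+^2 \lesssim a_n + n^{-1/2}\, r^{\,1-d/(2m)}\,(1 \vee J)^{d/(2m)} ,
\end{align*}
whose two unknowns are disentangled by the usual Young-inequality/balancing manipulation: with $a_n = n^{-2m/(2m+d)}$ the stochastic term is of the same order as $a_n$ exactly when $r^2 \asymp a_n$ and $J = O(1)$, which yields $\|\widehat g_n\|_{\mathcal L^2(Q_n)}^2 = O_P(n^{-2m/(2m+d)})$ and $I_m(\widehat f_n) = O_P(1)$; since these bounds keep $\|\widehat g_n\|_\infty$ bounded via the Poincaré inequality, one may indeed take $t_n = 1$ for all large $n$. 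I expect the main obstacle to be the bookkeeping in this last step --- propagating the \emph{seminorm} $I_m$ (rather than a genuine norm) through the peeling and the balancing, and in particular controlling the polynomial part of $\widehat g_n$, which is precisely where conditions A3--A4 are indispensable --- with the localization needed to make the entropy estimates applicable being the other delicate point.
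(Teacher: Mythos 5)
Your plan is correct and follows essentially the same route as the paper's proof: the convexity/rescaling localization (the paper takes $\widetilde f_n=\alpha\widehat f_n+(1-\alpha)f_0$ with $\alpha=1/(1+\|\widehat f_n-f_0\|_{\mathcal L^2(Q_n)})$ rather than a separate $t_n$), the A2 quadratic minorization made applicable via the Sobolev embedding under A3--A4, the $\delta^{-d/m}$ entropy bound, the resulting modulus of continuity $O_P(n^{-1/2})\|g\|_{\mathcal L^2(Q_n)}^{1-d/2m}\{1+I_m(g)\}^{d/2m}$ (obtained in the paper from van de Geer's Lemma 8.5 rather than contraction plus Dudley plus Talagrand, but to the same effect), and the final either/or balancing. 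The only substantive bookkeeping difference is that the paper absorbs the sup-norm rescaling into a factor $\{1+I_m^2(\widetilde f_n)\}^{-1}$ multiplying the quadratic lower bound rather than into your $t_n$, which is exactly the delicate point you flag.
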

\noindent
The limit conditions involving the smoothing parameter require that $\lambda$ tends to zero in probability as $n \to \infty$, but not too fast. The rate of decay $n^{-2m/(2m+d)}$ for $
\lambda$ ensures that the asymptotic variance and bias of the estimator are balanced and this leads to the rate of convergence $n^{-2m/(2m+d)}$ for thin-plate estimators. It is worth noting that $n^{-2m/(2m+d)}$ is the optimal (squared) rate of convergence for functions in $\mathcal{H}^{m}(\mathcal{O})$ \citep{Stone:1982} and therefore it cannot be improved, except in trivial cases. For $d=1$ our rate of convergence is in agreement with the rate obtained by \citet{Kalogridis:2021} for univariate smoothing spline estimators, but for $d>1$ the result in Theorem~\ref{thm:1} constitutes an important generalization. The obtained rate of convergence suggests that thin-plate spline estimators, like most nonparametric estimators, suffer from the curse of dimensionality and consequently, for given sample size $n$, estimation becomes less and less precise for larger predictor dimension $d$.

Theorem~\ref{thm:1} establishes not only a rate of convergence in the empirical norm $\| \cdot \|_{\mathcal{L}^2(Q_n)}$, but also the boundedness of the semi-norm $I_m(\widehat{f}_n)$. The latter is crucial in extending this rate of convergence to the $\mathcal{L}^2$-norm as well as in establishing optimal rates of convergence of certain useful derivatives. These extensions are given in Corollary~\ref{cor:1} below.
\begin{cor}
\label{cor:1}
Suppose that the assumptions of Theorem~\ref{thm:1} hold and that $h_{\max, n} = O(n^{-1/(2m+d)})$. Then, the sequence of minimizers of \eqref{eq:3}, $\widehat{f}_n$, satisfies
\begin{align*}
\int_{\mathcal{O}} |\widehat{f}_n(\mathbf{x})-f_0(\mathbf{x})|^2 d\mathbf{x} = O_P(n^{-2m/(2m+d)}),
\end{align*}
and for every $ (j_1, \ldots, j_d) \in \mathbbm{R}^d$ such that $j_1+\ldots+j_d = j \leq m$ 
\begin{align*}
\int_{\mathcal{O}} \left| \frac{\partial \widehat{f}_n^j(\mathbf{x})}{\partial x_1^{j_1} \ldots \partial x_d^{j_d} } -  \frac{\partial f^j_0(\mathbf{x})}{\partial x_1^{j_1} \ldots \partial x_d^{j_d} } \right|^2 d \mathbf{x} = O_P(n^{-2(m-j)/(2m+d)}).
\end{align*} 
\end{cor}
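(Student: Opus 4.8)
The plan is to derive both displays from the two conclusions of Theorem~\ref{thm:1}, namely $\|\widehat{f}_n-f_0\|_{\mathcal{L}^2(Q_n)}^2=O_P(n^{-2m/(2m+d)})$ and $I_m(\widehat{f}_n)=O_P(1)$, by combining a discretisation (sampling) inequality for Sobolev functions with a classical interpolation inequality for intermediate derivatives. Throughout set $g_n:=\widehat{f}_n-f_0$; since $f_0\in\mathcal{H}^m(\mathbbm{R}^d)$ (the modelling assumption) and $\widehat{f}_n\in\mathcal{H}^m(\mathbbm{R}^d)$, we have $g_n\in\mathcal{H}^m(\mathbbm{R}^d)$ with $I_m(g_n)\leq I_m(\widehat{f}_n)+I_m(f_0)=O_P(1)$ and, trivially, $\|g_n\|_{\mathcal{L}^2(Q_n)}^2=O_P(n^{-2m/(2m+d)})$.

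The first step is to pass from the empirical norm to the ambient $\mathcal{L}^2(\mathcal{O})$ norm. Since $2m>d$, point evaluation on $\mathcal{H}^m(\mathcal{O})$ is bounded, and since $\mathcal{O}$ satisfies the uniform cone condition A3, a Bramble--Hilbert argument on a covering of $\mathcal{O}$ by patches of diameter of order $h_{\max,n}$ --- the device used by \citet{Utr:1988} in the least squares setting --- yields constants $c_1,c_2>0$ and $h_0>0$, depending only on $\mathcal{O},m,d$ and the quasi-uniformity constant $B_2$ of A4, such that
\begin{align*}
\|g\|_{\mathcal{L}^2(\mathcal{O})}^2 \leq c_1\|g\|_{\mathcal{L}^2(Q_n)}^2 + c_2\, h_{\max,n}^{2m}\, I_m^2(g), \qquad g\in\mathcal{H}^m(\mathcal{O}),
\end{align*}
valid for every design with $h_{\max,n}\leq h_0$; here A4 guarantees that each patch contains at least $M$ design points in sufficiently general position for the local polynomial part to be recovered from the sampled values. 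Because $h_{\max,n}=O(n^{-1/(2m+d)})\to 0$, the inequality applies for all large $n$, and inserting the two rates above gives $\|g_n\|_{\mathcal{L}^2(\mathcal{O})}^2=O_P(n^{-2m/(2m+d)})$ --- the first displayed conclusion; the hypothesis $h_{\max,n}=O(n^{-1/(2m+d)})$ is exactly what renders the ``bias'' term $h_{\max,n}^{2m}I_m^2(g_n)$ of the correct order.

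For the derivatives I would invoke the interpolation inequality for Sobolev spaces over domains with the cone property \citep{Adams:2003}: there is a constant $C$ depending only on $\mathcal{O},m,d$ such that for every multi-index $(j_1,\dots,j_d)$ with $j_1+\dots+j_d=j\leq m$,
\begin{align*}
\left\|\frac{\partial^j g}{\partial x_1^{j_1}\cdots\partial x_d^{j_d}}\right\|_{\mathcal{L}^2(\mathcal{O})} \leq C\left( I_m(g)^{j/m}\,\|g\|_{\mathcal{L}^2(\mathcal{O})}^{1-j/m} + \|g\|_{\mathcal{L}^2(\mathcal{O})}\right), \qquad g\in\mathcal{H}^m(\mathcal{O}),
\end{align*}
which follows from the $\mathbbm{R}^d$ Gagliardo--Nirenberg inequality together with a Sobolev extension operator for $\mathcal{O}$ and the bound $\|g\|_{\mathcal{H}^m(\mathcal{O})}\leq c\big(I_m(g)+\|g\|_{\mathcal{L}^2(\mathcal{O})}\big)$, itself an instance of the same interpolation theorem. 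Applying this to $g_n$ with $I_m(g_n)=O_P(1)$ and $\|g_n\|_{\mathcal{L}^2(\mathcal{O})}=O_P(n^{-m/(2m+d)})$ from the first step, both terms on the right-hand side are $O_P(n^{-(m-j)/(2m+d)})$, so that squaring yields the second displayed conclusion; the case $j=0$ reproduces the first.

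The only genuinely nontrivial ingredient is the sampling inequality of the first step, whose constants must be uniform over the admissible class of designs --- this is where A3, A4, the hypothesis on $h_{\max,n}$ and the condition $2m>d$ are used together --- but, as noted, this inequality is already available in the thin-plate spline literature, so the rest is bookkeeping. It is worth emphasising why the two-step structure is needed: one cannot obtain the derivative rates by differentiating the sampling inequality, because a sampling bound for $\partial^j g$ inevitably carries a factor $h_{\max,n}^{-j}\asymp n^{j/d}$ in front of $\|g_n\|_{\mathcal{L}^2(Q_n)}$, which is too large; it is the interpolation inequality --- trading the sharp $\mathcal{L}^2$ rate against the merely bounded top-order seminorm --- that produces the correct exponent $-2(m-j)/(2m+d)$.
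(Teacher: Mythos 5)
Your proposal is correct and follows essentially the same route as the paper: the first display is obtained from the sampling inequality of \citet{Utr:1988} (Theorem 3.4 there) together with the two conclusions of Theorem~\ref{thm:1} and the hypothesis on $h_{\max,n}$, and the second from the Nirenberg interpolation inequality applied to $\widehat{f}_n-f_0$. The rate bookkeeping in both steps matches the paper's.
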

\noindent
The above rates of convergence are again optimal according to the results of \citet{Stone:1982}. To the best of our knowledge, these are the first convergence results for derivatives of thin-plate estimators even in the relatively simple least squares case. 

It is interesting to compare the rate of convergence $n^{-2m/(2m+d)}$ obtained herein with the rate $ \log(n) n^{-2m/(2m+d)}$ for the least squares thin-plate spline estimator emerging from the results of \citet[Chapter 21]{G:2010}. The square loss $\rho(x) = x^2/2$ is not covered by our previous set of assumptions (it is not Lipschitz), but least squares estimators may be easily treated on a separate basis. In particular, letting $\widehat{f}_n$ now denote the least squares estimator, one may easily verify the inequality
\begin{align}
\label{eq:4}
\|\widehat{f}_n - f_0 \|^2_{\mathcal{L}^2(Q_n)} + \lambda I_m^2(\widehat{f}_n) \leq 2 \langle \epsilon, \widehat{f}_n - f_0 \rangle_{\mathcal{L}^2(Q_n)} + \lambda I_m^2(f_0),
\end{align}
where $\langle \epsilon, \widehat{f}_n - f_0  \rangle_{\mathcal{L}^2(Q_n)}$ stands for $n^{-1} \sum_{i=1}^n \epsilon_i (\widehat{f}_n(\mathbf{x}_i) - f_0(\mathbf{x}_i))$. This inequality combined with our sharper estimate for the local entropy (compare our Lemma~\ref{lem:1} below with Lemma 20.6 in \citet{G:2010}) and the modulus of continuity of the empirical process derived in \citet[Chapter 10]{van de Geer:2000} now leads to Theorem~\ref{thm:2}.

\begin{theorem}
\label{thm:2}
Assume that the conditions of Proposition~\ref{prop:1} are met, A3 holds and that the errors $\epsilon_i$ are uniformly sub-Gaussian, i.e., there exist finite constants $K_1$ and $K_2 >0$ such that
\begin{align*}
\sup_n \max_{1 \leq i \leq n} K_1^2 \mathbb{E}\{e^{|\epsilon_i|^2/K_1^2}-1 \} \leq K_2.
\end{align*}
If $\lambda = O_P(n^{-2m/(2m+d)})$ and $\lambda^{-1} = O_P(n^{2m/(2m+d)})$, then there exists a sequence of least squares thin-plate spline estimators, $\widehat{f}_n$, satisfying
\begin{align*}
\|\widehat{f}_n-f_0\|_{\mathcal{L}^2(Q_n)}^2 = O_P(n^{-2m/(2m+d)}) \quad \text{and} \quad I_m(\widehat{f}_n) = O_P(1),
\end{align*}
as $n \to \infty$.
\end{theorem}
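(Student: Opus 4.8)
The plan is to run a penalized least squares argument in the spirit of \citet[Chapter 10]{van de Geer:2000}, taking the basic inequality \eqref{eq:4} as the starting point and feeding in the sharpened entropy estimate of Lemma~\ref{lem:1}. Write $\lambda_n = n^{-2m/(2m+d)}$. Since $\lambda = O_P(\lambda_n)$ and $\lambda^{-1} = O_P(\lambda_n^{-1})$, I would first fix constants $0 < c_1 < c_2 < \infty$ so that the event $A_n = \{c_1 \lambda_n \leq \lambda \leq c_2 \lambda_n\}$ has probability tending to one; on $A_n$, monotonicity of $t \mapsto t\, I_m^2(\cdot)$ turns \eqref{eq:4} into
\begin{align*}
\|\widehat{f}_n - f_0\|_{\mathcal{L}^2(Q_n)}^2 + c_1 \lambda_n I_m^2(\widehat{f}_n) \leq 2\langle \epsilon, \widehat{f}_n - f_0 \rangle_{\mathcal{L}^2(Q_n)} + c_2 \lambda_n I_m^2(f_0).
\end{align*}
Thus it suffices to argue with a deterministic smoothing parameter of exact order $\lambda_n$ and with $\widehat{f}_n$ an arbitrary measurable selection of minimizers, whose existence is guaranteed by Proposition~\ref{prop:1} since the square loss is convex and $2m > d$.

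The problem then reduces to controlling the random linear functional $g \mapsto \langle \epsilon, g\rangle_{\mathcal{L}^2(Q_n)} = n^{-1}\sum_{i=1}^n \epsilon_i g(\mathbf{x}_i)$ along $g = \widehat{f}_n - f_0$, uniformly over the shells $\mathcal{G}(\delta, R) = \{g \in \mathcal{H}^m(\mathbb{R}^d) : \|g\|_{\mathcal{L}^2(Q_n)} \leq \delta,\ I_m(g) \leq R\}$. Uniform sub-Gaussianity of the errors makes this functional a sub-Gaussian process with respect to the (semi)metric $n^{-1/2}\|\cdot\|_{\mathcal{L}^2(Q_n)}$, so the modulus-of-continuity bound of \citet[Chapter 10]{van de Geer:2000} (a Dudley-type maximal inequality) bounds its supremum over $\mathcal{G}(\delta, R)$ by a constant times
\begin{align*}
n^{-1/2} \int_0^{\delta} \sqrt{ \log N\!\left( u,\, \mathcal{G}(\delta, R),\, \|\cdot\|_{\mathcal{L}^2(Q_n)} \right)}\, du.
\end{align*}
Here Lemma~\ref{lem:1} enters decisively: it supplies $\log N(u, \mathcal{G}(\delta, R), \|\cdot\|_{\mathcal{L}^2(Q_n)}) \lesssim (R/u)^{d/m}$ uniformly in $n$, so by scaling the entropy integral is of order $R^{d/(2m)}\delta^{1-d/(2m)}$, finite precisely because $2m > d$ --- the same condition underpinning Proposition~\ref{prop:1}.

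The argument is then closed by a two-parameter peeling device over dyadic ranges of $\|\widehat{f}_n - f_0\|_{\mathcal{L}^2(Q_n)}$ and of $I_m(\widehat{f}_n)$. After a union bound over these (logarithmically many) shells, with probability tending to one the displayed basic inequality forces, on the shell $\mathcal{G}(\delta, R)$ containing $\widehat{f}_n - f_0$, the relation $\delta^2 + c_1 \lambda_n R^2 \lesssim n^{-1/2} R^{d/(2m)}\delta^{1-d/(2m)} + \lambda_n I_m^2(f_0)$. Because $d/(2m) < 1 < 2$, the noise term is sublinear in $\lambda_n R^2$, which rules out large $R$ and yields $I_m(\widehat{f}_n) = O_P(1)$; substituting $R = O_P(1)$ and using that $n^{-1/2}\delta^{1-d/(2m)} \leq \tfrac{1}{2}\delta^2$ as soon as $\delta \gtrsim \delta_n := n^{-m/(2m+d)}$, together with $\delta_n^2 \asymp \lambda_n$ and $\lambda_n I_m^2(f_0) = O(\lambda_n)$, gives $\|\widehat{f}_n - f_0\|_{\mathcal{L}^2(Q_n)} = O_P(\delta_n)$. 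These are the two asserted bounds.

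The step I expect to be the main obstacle is the simultaneous localization in the empirical norm $\|\cdot\|_{\mathcal{L}^2(Q_n)}$ and in the seminorm $I_m$: the peeling has to be carried out in both parameters at once, and the entropy bound of Lemma~\ref{lem:1} must hold in the \emph{empirical} metric (not a fixed $\mathcal{L}^2$ metric), uniformly over $n$, and without the superfluous logarithmic factor carried by Lemma~20.6 of \citet{G:2010} --- it is exactly the removal of this factor that sharpens $\log(n)\, n^{-2m/(2m+d)}$ to $n^{-2m/(2m+d)}$. A secondary technical point is that sub-Gaussianity has to be used in Orlicz ($\psi_2$) form, and one must check that the functions in $\mathcal{G}(\delta, R)$ are uniformly bounded on $\mathcal{O}$ so that the process genuinely has sub-Gaussian increments with the stated variance proxy; this follows from the continuous Sobolev embedding $\mathcal{H}^m(\mathcal{O}) \hookrightarrow C(\overline{\mathcal{O}})$ valid under A3 when $2m > d$, the control of $I_m(g)$ and $\|g\|_{\mathcal{L}^2(Q_n)}$ being converted into control of $\|g\|_{\mathcal{H}^m(\mathcal{O})}$ via the polynomial unisolvency of the design that is already assumed in Proposition~\ref{prop:1}.
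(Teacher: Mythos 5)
Your proposal is correct and follows essentially the same route as the paper: start from the basic inequality \eqref{eq:4}, bound $\langle \epsilon, \widehat{f}_n-f_0\rangle_{\mathcal{L}^2(Q_n)}$ by $O_P(n^{-1/2})\|\widehat{f}_n-f_0\|_{\mathcal{L}^2(Q_n)}^{1-d/2m}\{1+I_m(\widehat{f}_n)\}^{d/2m}$ using the sub-Gaussian modulus-of-continuity result of \citet[Chapter 10]{van de Geer:2000} together with the entropy bound of Lemma~\ref{lem:1}, and then solve the resulting inequality exactly as in Theorem~\ref{thm:1}. The only difference is presentational: you reconstruct the Dudley-integral and two-parameter peeling machinery explicitly, whereas the paper delegates it to the cited derivation on p.~168 of van de Geer (2000).
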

\noindent
It is important to emphasize that the sub-Gaussian requirement in Theorem~\ref{thm:2} is used exclusively for the treatment of the least-squares thin-plate estimator and not for the robust estimators for the treatment of which we rely solely on assumptions A1--A4. The sub-Gaussian requirement is met in practice, e.g., whenever the errors follow a Gaussian distribution or, more generally, whenever they possess a squared exponential moment. As noted previously, Theorem~\ref{thm:2} improves upon the corresponding result of \citet{G:2010}. Rates of convergence for the least squares estimator in the $\mathcal{L}^2(\mathcal{O})$-norm as well as rates of convergence for the derivatives may now be established exactly as in the proof of Corollary~\ref{cor:1}; we omit the details.

\section{Computation and smoothing parameter selection}
\label{sec:3}

As hinted previously, Proposition~\ref{prop:1} is crucial not only from a theoretical, but also from a practical standpoint. In fact, with the help of Proposition~\ref{prop:1} and reasoning along the same lines as in the proof of \citet[Theorem 7.3]{Green:1994}, it can be shown that in order to identify the minimizer of $L_n(f)$ in \eqref{eq:3} it suffices to restrict attention to functions of the form
\begin{align}
\label{eq:5}
f(\mathbf{x}) = \sum_{i=1}^n \gamma_i \eta_{m, d}(\|\mathbf{x}-\mathbf{x}_i\|) + \sum_{j=1}^M \delta_j \phi_j(\mathbf{x}),
\end{align}
where $\eta_{m,d}:\mathbb{R}_{+} \to \mathbb{R}$ is given by
\begin{align*}
\eta_{m, d}(x) = \begin{cases} \frac{(-1)^{m+1+d/2}}{2^{2m-1} \pi^{d/2} (m-1)!(m-d/2)!} x^{2m-d} \log(x) & d \ \text{even} \\ 
\frac{\Gamma(d/2-m)}{2^{2m} \pi^{d/2} (m-1)!}x^{2m-d} & d \ \text{odd}, \end{cases}
\end{align*}
with $\Gamma(\cdot)$ denoting Euler's gamma function. The coefficient vector $\boldsymbol{\gamma}$ is subject to the set of linear constraints $\boldsymbol{\Phi}^{\top}\boldsymbol{\gamma} = \mathbf{0}$, where the $n \times M$ matrix $\boldsymbol{\Phi}$ has $(i,j)$th entry $\phi_j(\mathbf{x}_i)$. In other words, the coefficient vector $\boldsymbol{\gamma} \in \mathbbm{R}^n$ needs to be perpendicular to the $M$-dimensional space spanned by the restriction of the polynomial functions $\phi_1, \ldots, \phi_M$ to the design points. 

The representation in \eqref{eq:5} as well as the set of linear constraints are derived from reproducing kernel Hilbert space arguments that split the space $\mathcal{H}^m(\mathbb{R}^d)$ into the closed finite-dimensional null space of $I_m(f)$ and its orthogonal complement. It follows from these arguments that  plugging \eqref{eq:5} into \eqref{eq:3} yields the quadratic form $I_m^2(f) = \boldsymbol{\gamma}^\top \mathbf{\Omega}\boldsymbol{\gamma} $ where the $n \times n$ matrix $\mathbf{\Omega}$ has $(i,j)$th entry $\eta_{m,d}(\|\mathbf{x}_i - \mathbf{x}_j\|)$ and the minimization problem becomes
\begin{equation}
\begin{aligned}
\label{eq:6}
& \min_{(\boldsymbol{\gamma}, \boldsymbol{\delta}) \in \mathbbm{R}^n \times \mathbbm{R}^M}  \left[ \frac{1}{n} \sum_{i=1}^n \rho(Y_i - \boldsymbol{\omega}_i^{\top} \boldsymbol{\gamma}  - \boldsymbol{\phi}_i^{\top} \boldsymbol{\delta}) + \lambda \boldsymbol{\gamma}^\top \mathbf{\Omega}\boldsymbol{\gamma}\right] \\ 
& \quad \quad \text{s.t.} \  \boldsymbol{\Phi}^{\top} \boldsymbol{\gamma} = \mathbf{0},
\end{aligned}
\end{equation}
with $\boldsymbol{\omega}_i \in \mathbbm{R}^n$ and $\boldsymbol{\phi}_i \in \mathbbm{R}^M$ denoting row vectors of $\mathbf{\Omega}$ and $\boldsymbol{\Phi}$, respectively. It can be shown that for all $\boldsymbol{\gamma} \in \mathbbm{R}^n$ satisfying $\boldsymbol{\Phi}^{\top}\boldsymbol{\gamma} = \mathbf{0}$ we have $\boldsymbol{\gamma}^{\top} \mathbf{\Omega} \boldsymbol{\gamma}>0$, see \citet[pp. 32--33]{Wahba:1990} and the references therein. We may automatically incorporate the linear constraints and further simplify the problem by putting $\boldsymbol{\gamma} = \mathbf{Q}\boldsymbol{\beta}$ for a matrix $\mathbf{Q}$ whose columns span the null space of $\boldsymbol{\Phi}^{\top}$ and $\boldsymbol{\beta} \in \mathbbm{R}^{n-M}$. The matrix $\mathbf{Q}$ can be easily obtained through the QR or singular value decompositions of $\boldsymbol{\Phi}$. With this simplification, the solution to \eqref{eq:6} may be identified using, for example, the variant of the iteratively reweighted least squares (IRLS) algorithm employed by \citet{Kalogridis:2021} with the radial basis functions and polynomials in \eqref{eq:5} taking the place of B-spline basis functions there. 

To determine the penalty parameter $\lambda$ in a data-driven way, we propose a suitable adaptation of the strategy of \citet{Maronna:2011}. Let $\mathbf{r}_{-}=(r_{-1}, \ldots, r_{-n})^{\top}$ denote an approximation to the leave-one out residuals, as obtained, for example, from the last step of the IRLS algorithm. We propose to select the value of $\lambda$ that minimizes the robust cross-validation (RCV) criterion
\begin{equation*}
\RCV(\lambda) =  |\tau(\mathbf{r}_{-})|^2,
\end{equation*}
where $\tau$ denotes the robust and efficient $\tau$-scale introduced by \citet{Yohai:1988} with tuning constants equal to $c_1 = 3$ and $c_2 = 5$, corresponding to the biweighting of the mean and standard deviation respectively. This criterion may be viewed as a robustification of the celebrated leave-one-out criterion \citep[see, e.g.,][pp. 47--52]{Wahba:1990} in which the $\tau$-scale is replaced by the mean of squares of the $r_{-i}$. Thus, while in the classical leave-one-out criterion all the $|r_{-i}|^2$ contribute equally (with weight $n^{-1}$ each), the use of the robust $\tau$-scale employed herein reduces the effect of large $|r_{-i}|^2$ on the selection of $\lambda$ thereby leading to a robust automatic selection procedure.

\section{A Monte Carlo study}
\label{sec:4}

We now examine the practical performance of several thin-plate spline estimators by means of a simulation study. We are interested in the performance of the competing estimators not only in completely regular data, but also in data that may contain a number of outlying observations. The estimators to be considered are 
\begin{itemize}
\item The classical least squares thin-plate spline estimator, abbreviated as LS.
\item The least absolute deviations type thin-plate spline estimator with loss function $\rho(x) = |x|$ abbreviated as LAD.
\item The Huber type thin-plate spline estimator with loss function
\begin{align*}
\rho(x) = \begin{cases} x^2/2 & |x| < 1.345 \\ 1.345|x| - 1.345^2/2 & |x| \geq 1.345 \end{cases},
\end{align*}
abbreviated as Huber.
\item The logistic type thin-plate spline estimator with loss function 
\begin{align*}
\rho(x) = 2x + 4\log(1+e^{-x}),
\end{align*}
abbreviated as Logistic.
\end{itemize}

To compare the above estimators we generate observations from the model
\begin{align*}
Y_i = f_0(\mathbf{x}_i) + \epsilon_i, \quad(i=1, \ldots, n),
\end{align*}
where the regression function $f_0(\mathbf{x})$ is either given by $f_1(x_1, x_2) = \exp[-8|x_1-0.5|^2-8|x_2-0.5|^2]$, $f_2(x_1, x_2) = \sin(2 \pi x_1) \cos(2 \pi x_2)$ or $f_3(x_1, x_2, x_3) = x_1^2 + x_2^2 + x_3^2$, $(x_{i1}, x_{i2}, x_{i3})$ are uniformly distributed random variables in the unit square $(0,1)^3$ and the errors $\epsilon_i$ are generated according to the following distributions: (i) the standard Gaussian distribution, (ii) the t-distribution with $3$ degrees of freedom, (iii) the skewed t-distribution with $3$ degrees of freedom and non-centrality parameter equal to $1$ leading to right-skewed data, (iv) a mixture Gaussian distribution with weights means equal to $1$ and $10$, variances equal to $1$ and $0.01$ and weights equal to $0.85$ and $0.15$ respectively and (v) Tukey's Slash distribution defined as the distribution of the quotient of independent standard normal and uniform random variables. We assess the performance of the competing estimators through the mean-squared error (MSE) given by
\begin{align*}
\MSE = \frac{1}{n} \sum_{i=1}^n |\widehat{f}_n(\mathbf{x}_i)-f_0(\mathbf{x}_i)|^2,
\end{align*}
which is an approximation to the squared $\mathcal{L}^2$-error. Table~\ref{tab:1} below reports average MSEs and their standard errors obtained from 1000 replications with datasets of size $n=100$.

\begin{table}[H]
\centering
\begin{tabular}{lcccccccccc}
& & \multicolumn{2}{c}{LS}  &  \multicolumn{2}{c}{Huber} &  \multicolumn{2}{c}{Logistic} &  \multicolumn{2}{c}{LAD} \\ \\[-2ex]
$f_0$ & Dist. & Mean & SE & Mean & SE & Mean & SE & Mean & SE \\ \\[-1.2ex]
\multirow{5}{*}{$f_1$} & Gaussian & 7.17 & 0.11 & 7.34 & 0.10 & 7.39 & 0.10 & 11.76 & 0.14 \\ \\[-2ex] 
& $t_3$ & 15.11 & 0.36 & 9.52 & 0.13 & 9.89 & 0.15 & 13.01 & 0.19 \\ \\[-2ex]
& $st_{3, 1}$ &  74.04 & 0.87 & 52.26 & 0.49 & 56.92 & 0.55 & 48.76 & 0.52 \\ \\[-2ex]
& M. Gaussian & 292.2 & 4.22 & 21.49 & 0.40 & 36.41 & 0.68 & 21.06 & 0.39 \\ \\[-2ex]
& Slash & 9e+07 & 3e+06 & 25.09 & 0.48 & 27.16 & 0.56 & 27.80 & 0.56 \\ \\[-1.2ex]
\multirow{5}{*}{$f_2$} & Gaussian & 14.20 & 0.16 & 14.17 & 0.17 & 13.88 & 0.17 & 21.04 & 0.23 \\ \\[-2ex]
& $t_3$ & 28.25 & 0.45 & 21.98 & 0.37 & 22.02 & 0.30 & 24.44 & 0.27 \\ \\[-2ex]
& $st_{3, 1}$ & 84.83 & 0.90 & 62.13 & 0.55 & 66.81 & 0.58 & 59.13 & 0.57 \\ \\[-2ex]
& M. Gaussian & 311.1 & 4.18 & 39.12 & 1.23 & 57.14 & 1.50 & 34.94 & 0.89 \\ \\[-2ex]
& Slash & 4e+07 &  2e+06 & 72.19 & 2.63 & 56.70 & 1.38 & 57.29 & 0.95 \\ \\[-1.2ex]
\multirow{5}{*}{$f_3$} & Gaussian & 7.72 & 0.27 & 7.77 & 0.26 & 7.73 & 0.25 &  15.07 &  0.28 \\ \\[-2ex]
& $t_3$ & 18.19 & 0.75 & 11.52 & 0.43 & 12.01 & 0.43 & 18.43 & 0.33 \\ \\[-2ex]
& $st_{3, 1}$ & 74.37 & 1.05 & 53.10 & 0.57 & 57.45 & 0.60 & 52.47 & 0.56 \\ \\[-2ex]
& M. Gaussian & 321.90 & 6.35 & 25.89 & 0.72 & 43.02 & 1.47 & 25.09 & 0.50 \\ \\[-2ex]
& Slash & 4e+07 &  5e+06 & 45.83 & 1.40 & 48.63 & 1.73 & 43.66 & 0.87
\end{tabular}
\caption{Means and standard errors of 1000 MSEs $(
\times 100)$ with $n=100$ of the least squares, Huber, logistic and least absolute deviations type thin-plate spline estimators.}
\label{tab:1}
\end{table}

The results in Table~\ref{tab:1} indicate the sensitivity of least squares estimator to departures from the (sub-)Gaussian assumption. In particular, even with $t_3$ errors that lead to a few mild outliers, the least squares estimator significantly loses ground compared to the Huber, Logistic and LAD estimators. The former two estimators almost match the performance of the least squares estimator in regular data but also exhibit a high degree of resistance towards gross errors. The LAD estimator is inefficient relative to its competitors in the situation of light-tailed Gaussian errors, but exhibits superior performance to the Huber and Logistic thin-plate spline estimators in situations of  heavy-tailed asymmetric contamination, such as contamination incurred by $st_{3,1}$ and mixture Gaussian errors. 

To illustrate the key practical differences between the least squares thin-plate spline and its robust counterparts, Figures \ref{fig1} and \ref{fig2} present two examples of estimated surfaces by the least squares and least absolute deviations estimators under Gaussian and mixture Gaussian errors, respectively. The regression function $f_1$ has a unit-sized bump at $(0.5, 0.5)$, which both estimators get right in the absence of outliers, as evidenced in Figure~\ref{fig1}. Figure~\ref{fig2}, however, indicates that least squares estimates can be heavily distorted by the presence of outlying observations to the extent that the estimated surface bears no resemblance to the true surface. Despite the heavy contamination, the bump is still visible in the right panel of Figure~\ref{fig2}, which depicts the LAD estimated surface. 

\begin{figure}[H]
\centering
\subfloat{\includegraphics[ width = 0.49\textwidth]{"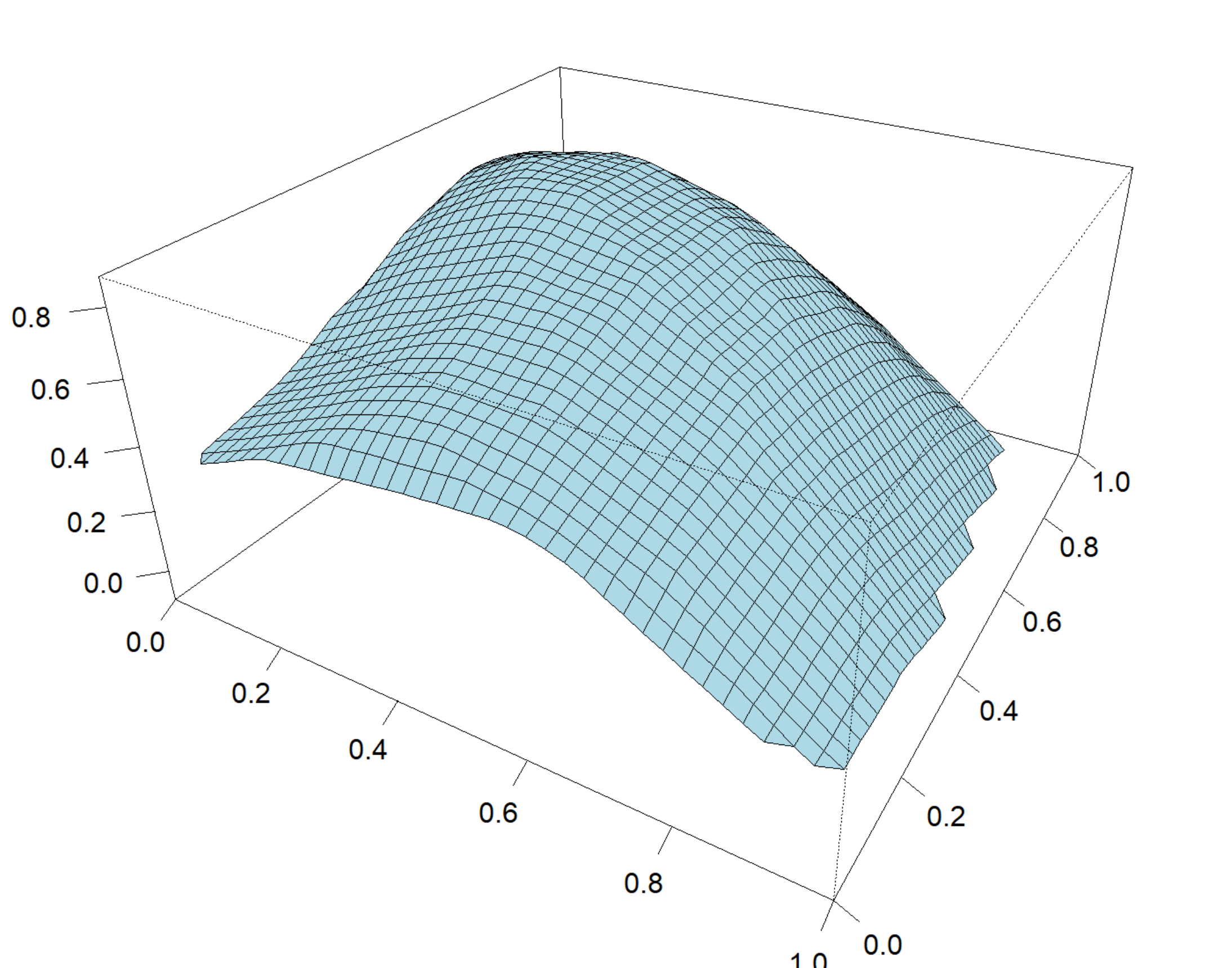"}}  
\subfloat{\includegraphics[ width = 0.49\textwidth]{"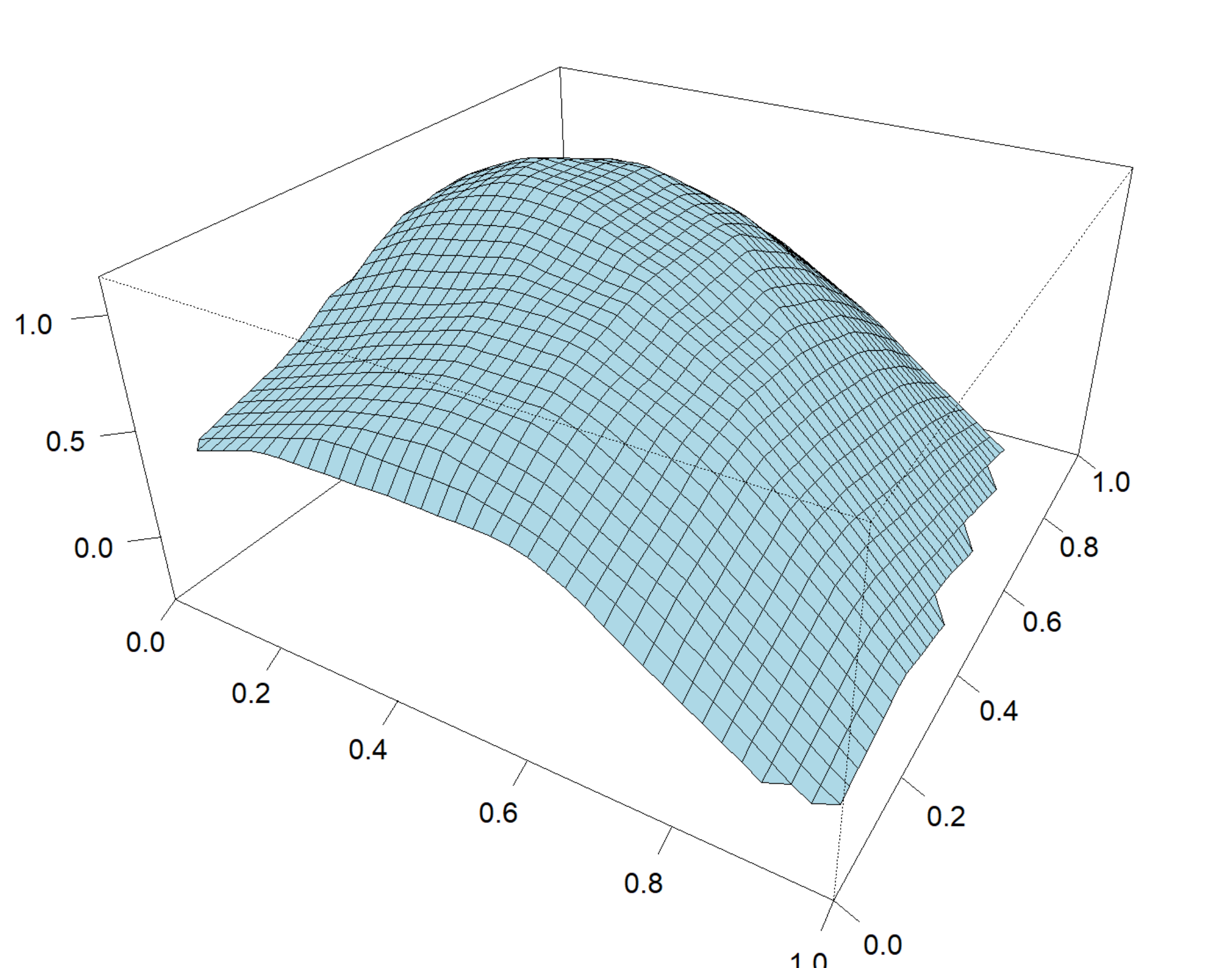"}} 
\caption{Typical estimated surfaces for $f_1(x_1, x_2) =  \exp[-8|x_1-0.5|^2-8|x_2-0.5|^2]$ under Gaussian errors by the least squares and least absolute deviations estimators on the left and right panels, respectively.}
\label{fig1}
\end{figure}

\section{Application: Ozone Levels in Midwestern USA}

While stratospheric ozone protects living organisms from ultraviolet radiation from the sun, high ground-level concentrations of ozone can trigger a variety of health problems, particularly for people with breathing difficulties, children and the elderly. In this example we examine the concentration of ground-level in the midwestern US as a function of geographical longitude and latitude. That is, we consider the model
\begin{align*}
\text{Ozone}_i = f_0(\text{Longitude}_i, \text{Latitude}_i) + \epsilon_i, \quad (i=1, \ldots, n).
\end{align*}
The data for this analysis consists of $8$-hour average surface ozone from 9am to 4pm in parts per billion (PPB) from 147 sites in midwestern US on July 3, 1987. The geographical coordinates of these sites constitute the predictor variables. The present dataset is part of a much larger dataset which is freely available as part of the \texttt{fields} package \citep{N:2017} in CRAN \citep{R}. 

\begin{figure}[H]
\centering
\subfloat{\includegraphics[ width = 0.49\textwidth]{"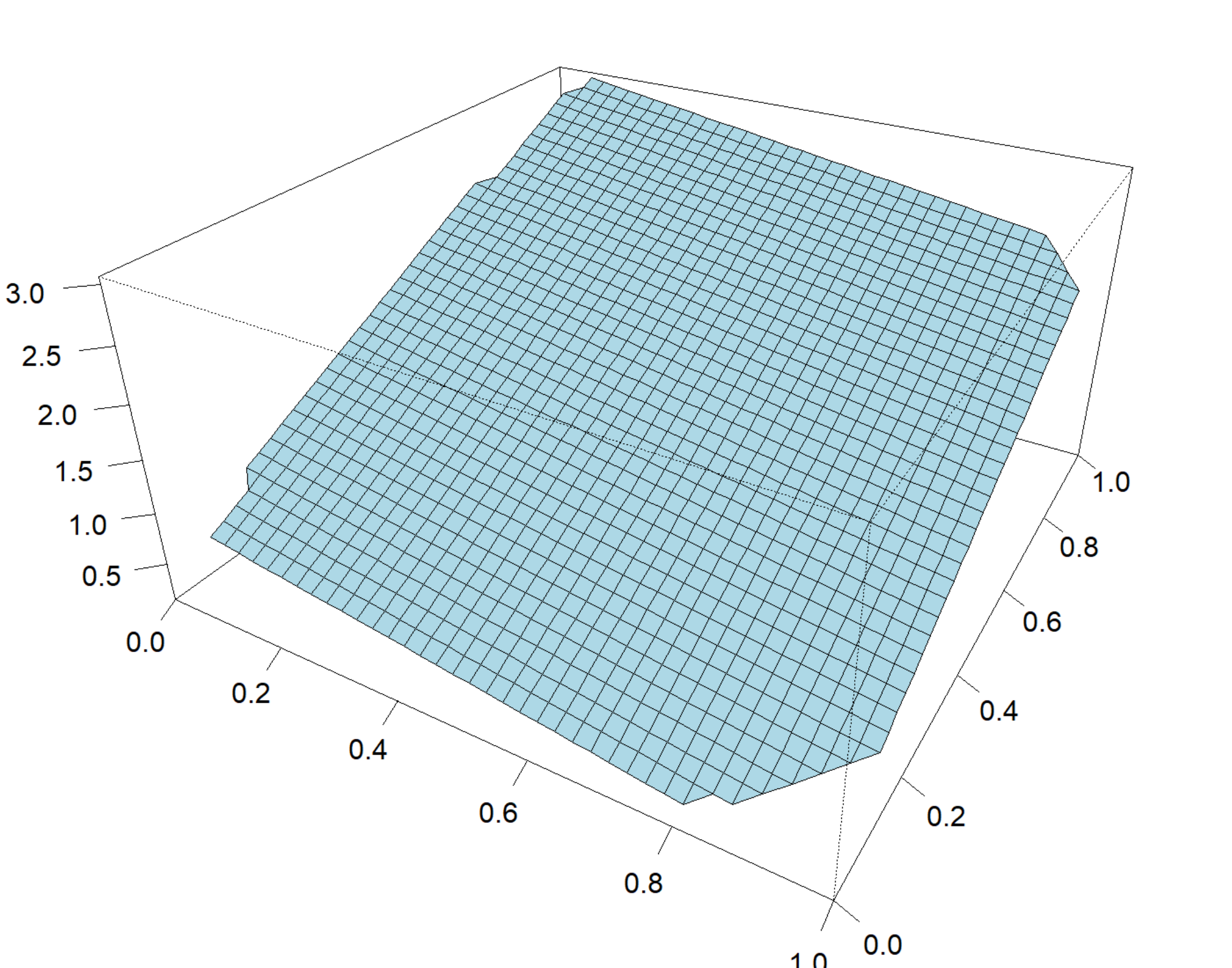"}}  
\subfloat{\includegraphics[ width = 0.49\textwidth]{"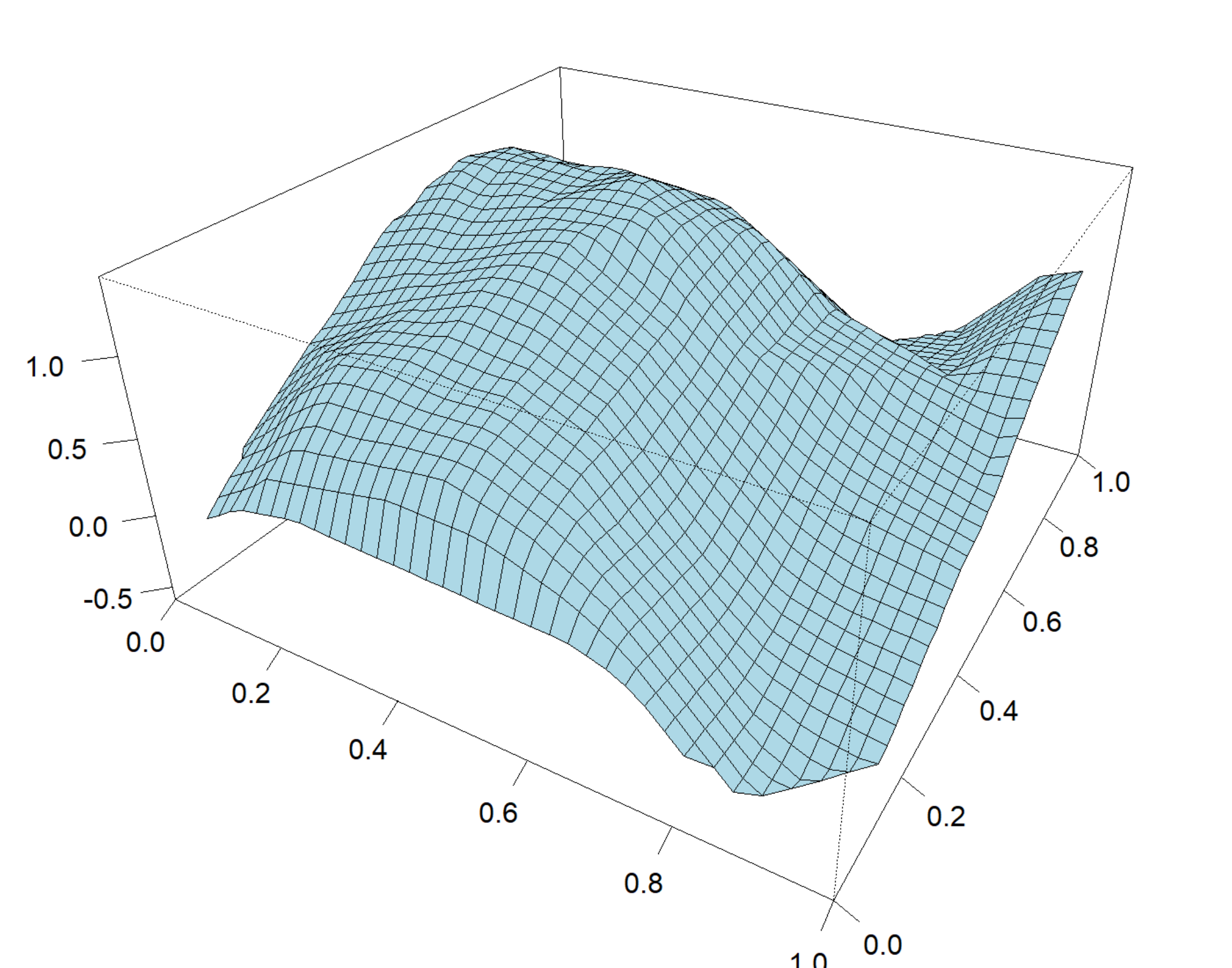"}} 
\caption{Typical estimated surfaces for $f_1(x_1, x_2) =  \exp[-8|x_1-0.5|^2-8|x_2-0.5|^2]$ under mixture Gaussian errors by the least squares and least absolute deviations estimators on the left and right panels, respectively.}
\label{fig2}
\end{figure}

Since ozone concentration tends to be a skewed random variable with several potentially outlying values, we have estimated the regression function $f_0$ with both the least squares (LS) and least absolute deviations (LAD) thin-plate spline estimators. The contours of the estimated surfaces on the convex hull of the data are depicted in the left and right panels of Figure~\ref{fig3}, respectively. The panels of the figure suggest that while there is broad agreement between these estimated surfaces around Indianapolis, the surfaces are noticeably different on the central and western part of the data. In particular, in the areas west and south of Milwaukee and St. Louis, LS estimates tend to underestimate ozone concentrations relative to LAD estimates. These differences are probably more consequential when it comes to concentrations in the excess of 60 PPB as these are more detrimental to one's health.

In view of the lack of resistance of LS estimators, it may be conjectured that the large differences between the estimates depicted in Figure~\ref{fig3} are attributable to the presence of atypical observations within the data. Since robust regression estimators are less attracted to outlying observations, such observations result in large absolute residuals which we may then use for their detection. A popular rule of thumb in that respect involves flagging the $i$th observation as an outlier if its standardized absolute LAD residual $r_i/\MAD(\mathbf{r})$ is larger than $2.5$. This rule results in the detection of 14 outlying observations for the LAD estimator, but only 7 for the non-robust LS estimator.

\begin{figure}[H]
\centering
\subfloat{\includegraphics[ width = 0.49\textwidth]{"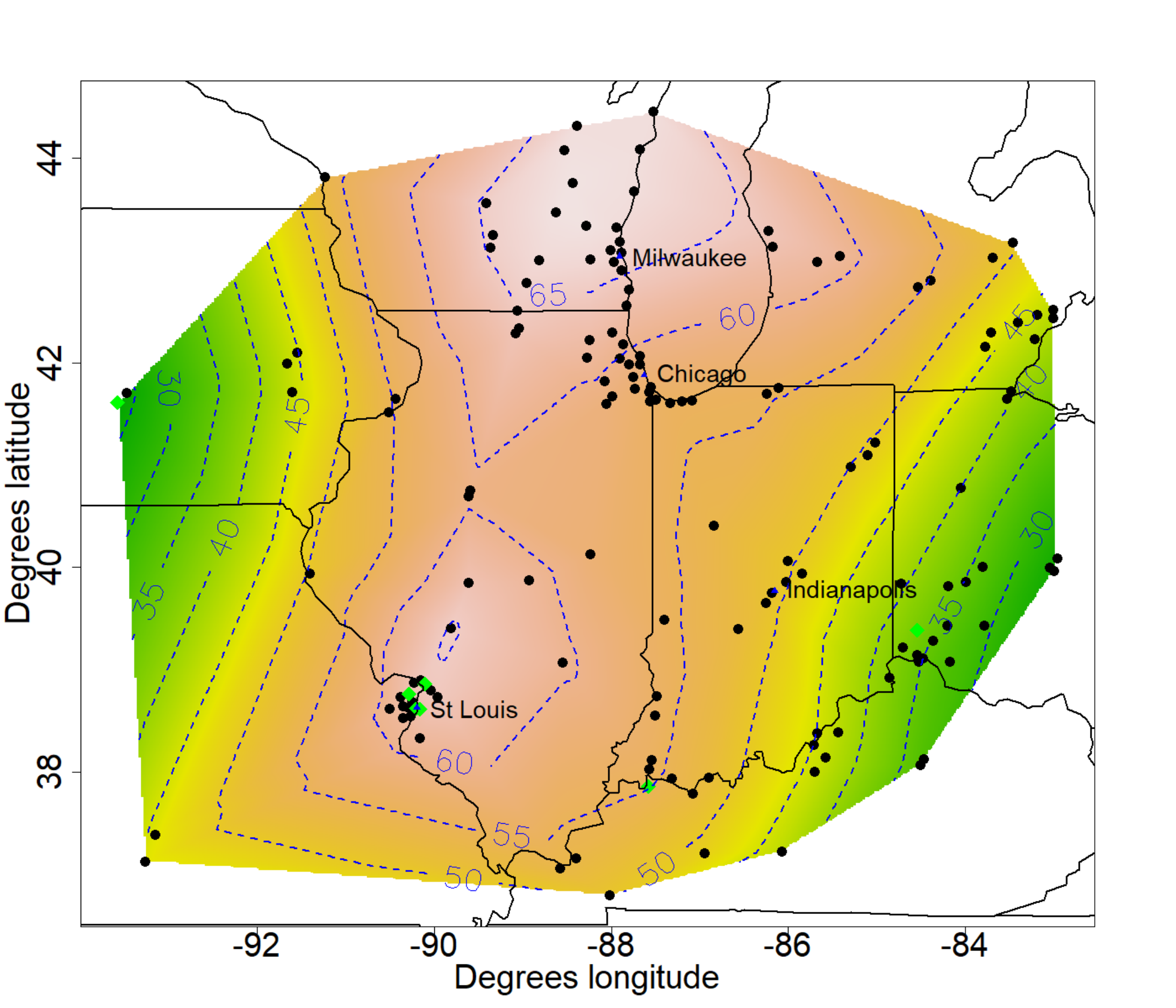"}}  
\subfloat{\includegraphics[ width = 0.49\textwidth]{"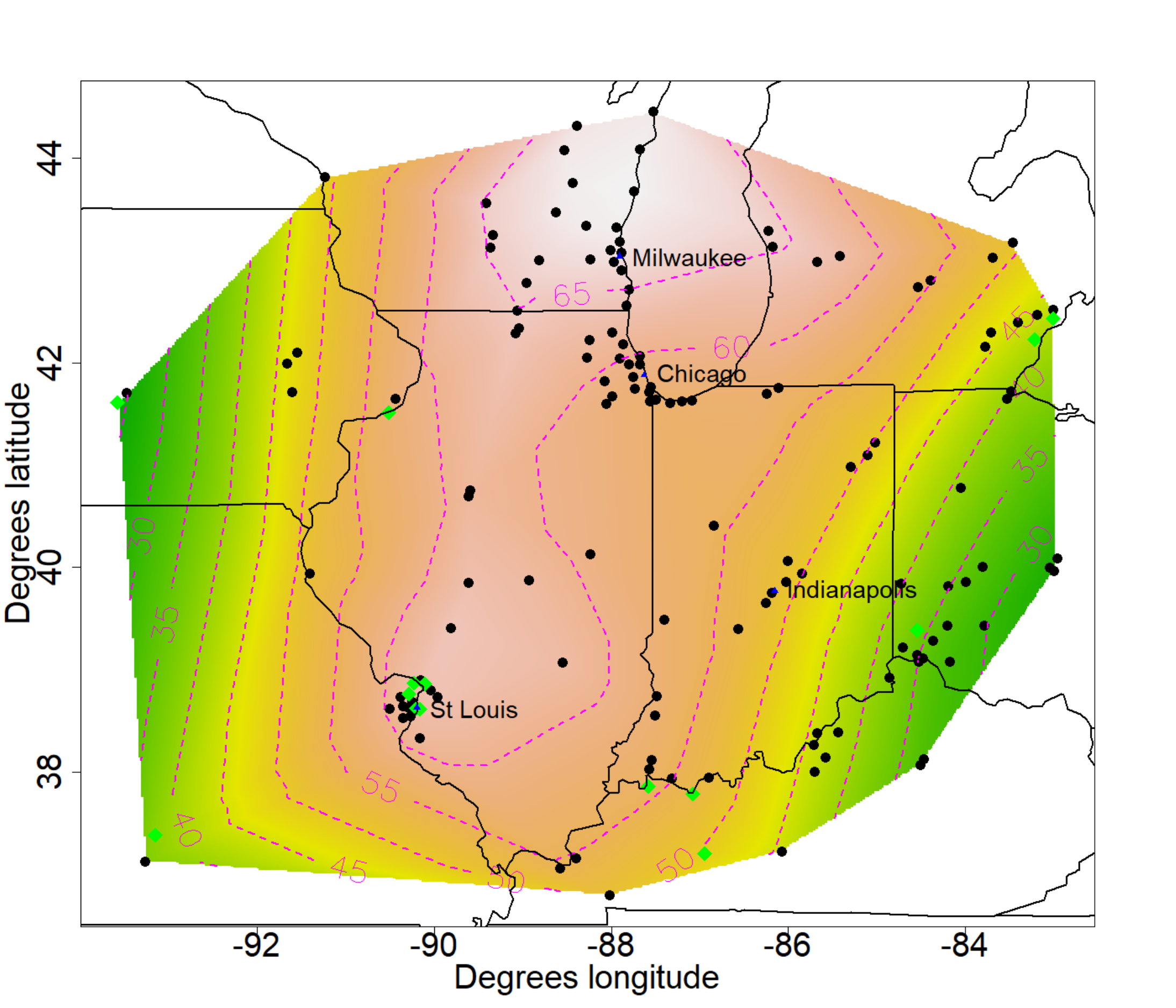"}} 
\caption{Contours of the estimated surfaces. Left: contours of the least squares thin-plate spline estimator. Right: contours of the least absolute deviations thin-plate spline estimator. Darker colors indicate higher ozone concentrations. The observation sites are depicted with solid black dots and green rhombuses depending on whether the observation is classified as an outlier or not.}
\label{fig3}
\end{figure}
\vspace{-0.2cm}
\begin{figure}[H]
\centering
\subfloat{\includegraphics[ width = 0.49\textwidth]{"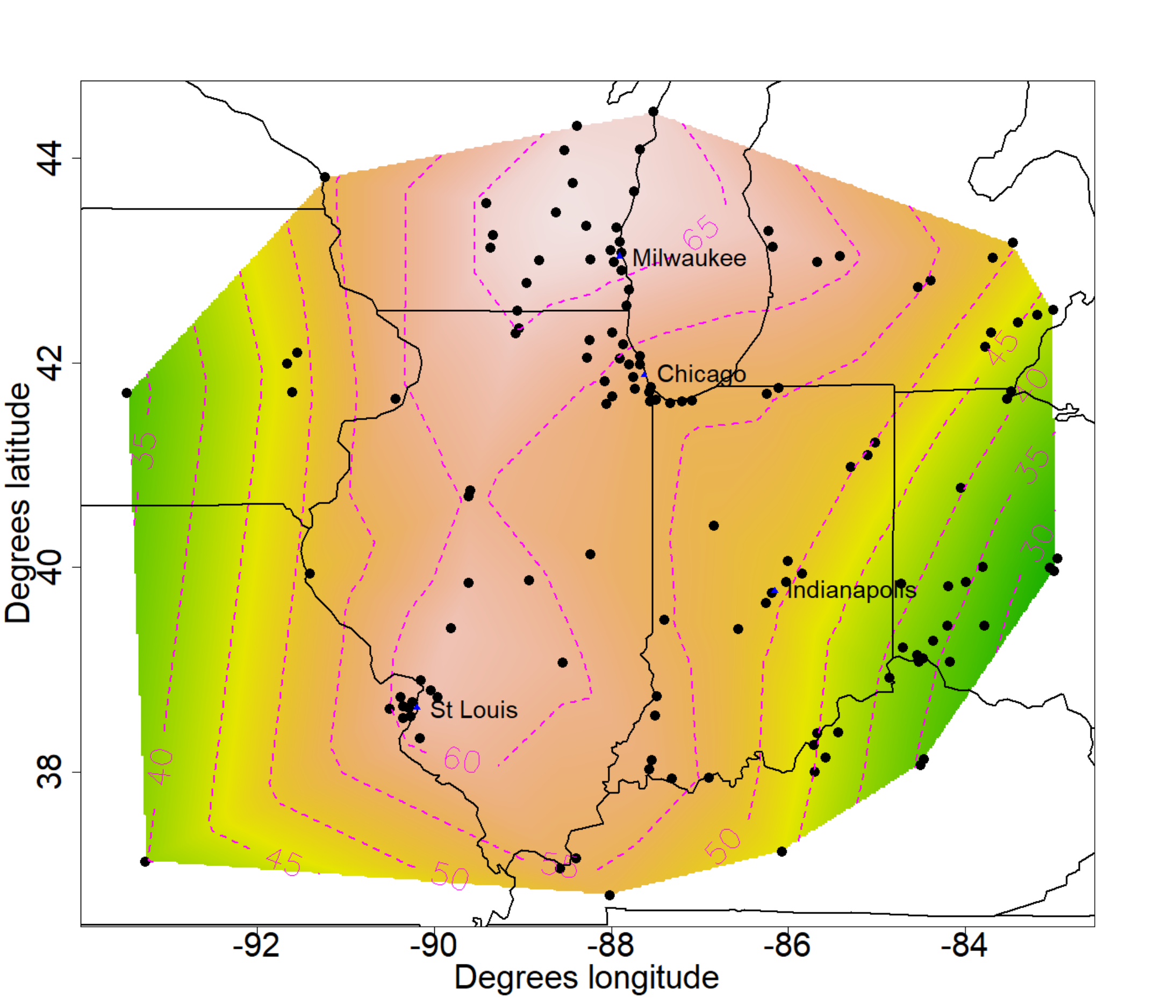"}}  
\subfloat{\includegraphics[ width = 0.49\textwidth]{"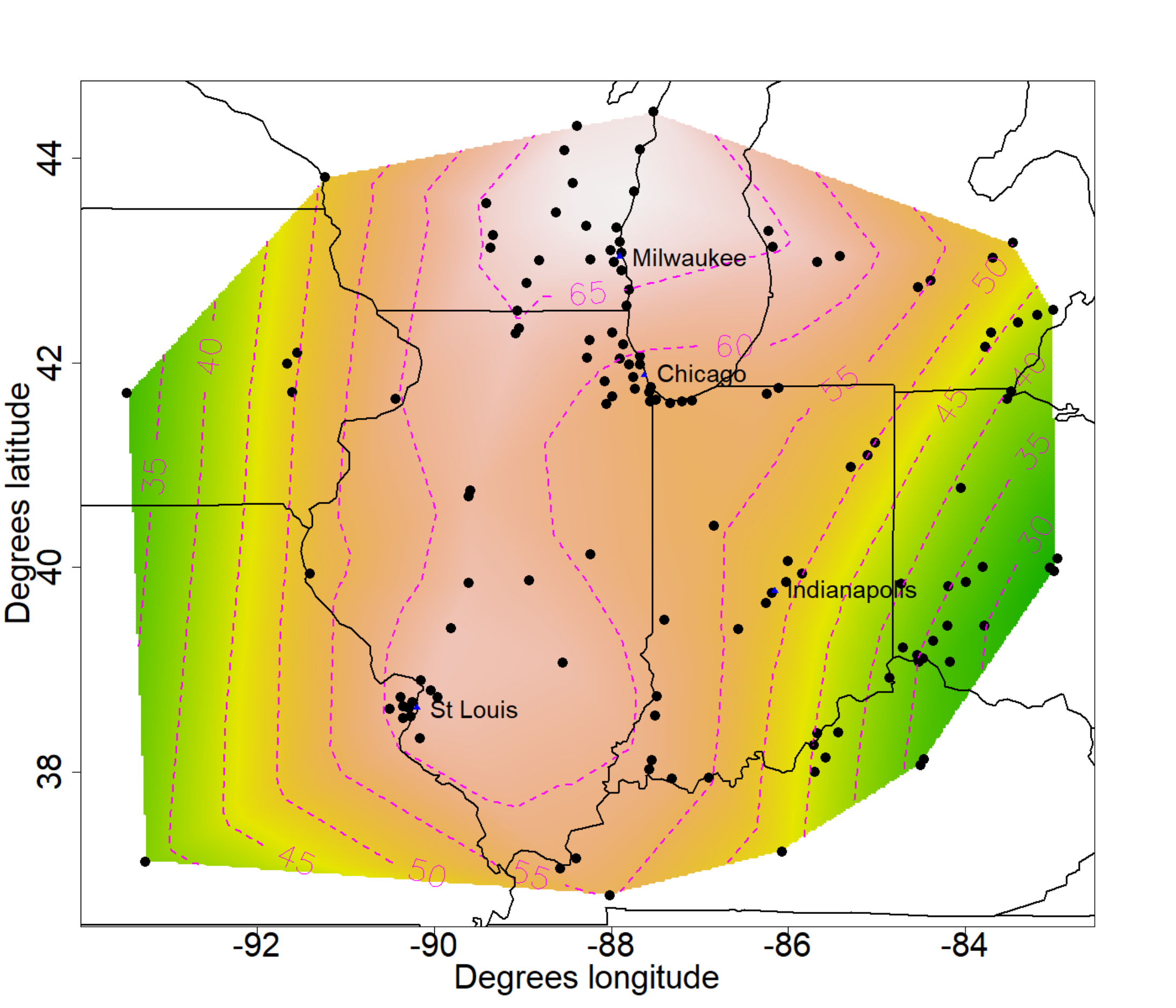"}} 
\caption{Contours of the estimated surfaces after removing outlying observations from the data. Left: contours of the least squares thin-plate spline estimator. Right: contours of the least absolute deviations thin-plate spline estimator. Darker colors indicate higher ozone concentrations.  The observation sites are depicted with solid black dots.}
\label{fig4}
\end{figure}

To demonstrate precisely how sensitive the LS estimator is towards outlying observations, we have removed the LAD-detected outliers from the dataset and recomputed the estimates. This results in  the left and right panels in Figure~\ref{fig4} for the LS and LAD estimates, respectively. It is interesting to observe that without these outliers the LAD and LS estimates are much closer to one another. Moreover, comparing the right panels of Figure~\ref{fig3} and Figure~\ref{fig4} reveals that, contrary to LS estimates, LAD estimates have undergone minimal change after removing the outliers from the data. LAD estimates are thus better able to describe the bulk of the data than their sensitive LS counterparts.

\section{Conclusion}

The present paper introduces robust estimators for multivariate nonparametric regression models based on the highly advantageous thin-plate penalty. The proposed class of estimators enjoys optimal theoretical properties under mild assumptions and can be expediently computed even in high dimensions. There are several research directions worth exploring from here. Our theoretical treatment relies on a specific rate of decay of $\lambda$ and it is not known whether $\lambda$  attains this rate of decay when determined by our robust cross validation procedure. In order to ascertain whether this is the case, a close investigation of robust model selection procedures can be worthwhile. Additionally, our treatment rests upon the convexity of the loss function and hence the important class of redescending thin-plate estimators is left out. Future effort can therefore be directed towards a dedicated treatment of these estimators. 

An important generalization of our ideas would involve robust estimation of multivariate nonparametric generalized linear models, where the distribution of the response variable can be any member of the exponential family, thus significantly extending the range of applications. Such an estimator may be based on, for example, the thin-plate penalty proposed herein and the density power divergence, as used by \citet{Cl:2022} in the univariate setting.

Another important area of research where thin-plate splines are likely to be successful is functional data analysis and, in particular, location and dispersion estimation from discretely sampled functional data. For the location case, thin-plate splines may be used to extend the robust estimator of \citet{Kalogridis:2022} for discretely sampled functional data on a bounded interval to discretely sampled functional data on much more complicated multivariate domains, such as a sphere. For dispersion estimation, thin-plate splines can be combined with resistant loss functions and provide a potent alternative to estimators based on tensor product penalties, see, e.g., \citep[Chapter 8]{Hsing:2015}. We aim to study these important generalizations as part of our future work.

\section*{Appendix: Proofs of the theoretical results}
\addcontentsline{toc}{section}{Appendix: Proofs of the theoretical results}

\begin{proof}[Proof of Proposition~\ref{prop:1}]
The result would follow by direct application of Theorem 3.2 of \citet{Cox:1985} provided that we can check the conditions of that theorem. Since $I_m^2$ is a squared semi-norm on $\mathcal{H}^{m}(\mathbbm{R}^d)$ and $I_m$ has a finite $M$-dimensional null space, these conditions entail the convexity and weak lower semicontinuity of the map 
\begin{align*}
\mathcal{H}^{m}(\mathbbm{R}^d) \to \mathbb{R}_{+} :  f \mapsto n^{-1} \sum_{i=1}^n \rho(Y_i - f(\mathbf{x}_i)).
\end{align*}
We only need to check weak lower continuity, because convexity follows easily from the convexity of $\rho$. Convexity also implies that we only need to establish the lower semicontinuity of the map, as by the Hahn-Banach theorem \citep{Rynne:2008} convexity and lower semicontinuity imply weak lower semicontinuity. To that end, let $\{f_k \}_k$ denote a sequence in $\mathcal{H}^{m}(\mathbbm{R}^d)$ converging to some $f^{\star}$. By the Sobolev embedding theorem \citep[Theorem 4.12]{Adams:2003}, we have
\begin{align*}
\max_{1 \leq i \leq n} |f_k(\mathbf{x}_i) - f^{\star}(\mathbf{x}_i)| \leq c_0 \| f_k - f^{\star} \|_{\mathcal{H}^{m}(B_{r_n}^d(0))} \leq c_0 \| f_k - f^{\star} \|_{\mathcal{H}^{m}(\mathbb{R}^d)},
\end{align*}
for some $c_0>0$ not depending on $k$, where $B_{r_n}^d(0)$ denotes the smallest ball in $\mathbbm{R}^d$ containing all the $\mathbf{x}_i$ and $\| \cdot \|_{\mathcal{H}^{m}(B_{r_n}^d(0))},\| \cdot \|_{\mathcal{H}^{m}(\mathbb{R}^d)} $ denote the standard Sobolev norms on $\mathcal{B}_{r_n}^d(0)$ and $\mathbbm{R}^d$, respectively. Letting $k \to \infty$, we find that $\max_{1 \leq i \leq n} |f_k(\mathbf{x}_i) - f^{\star}(\mathbf{x}_i)| \to 0 $. The continuity of $\rho$ concludes the proof.
\end{proof}

We now introduce some useful notation that will be used in the proof of Theorem~\ref{thm:1} and Theorem~\ref{thm:2}. Let $Q_n$ denote the probability measure on $(\mathbb{R}^d, \mathcal{B}(\mathbbm{R}^d))$ given by
\begin{align*}
Q_n(A) = \frac{1}{n} \sum_{i=1}^n \mathcal{I}(\mathbf{x}_i \in A), \ A \in \mathcal{B}(\mathbbm{R}^d),
\end{align*}
where $\mathcal{I}(\cdot)$ denotes the indicator function. Let $\mathcal{F}$ denote a class of real-valued functions on some domain $\mathcal{X} \subset \mathbb{R}^d$. The $\delta$-entropy for $\mathcal{F}$ in the $\mathcal{L}^2(Q_n)$-norm, $H(\delta, \mathcal{F}, \mathcal{L}^2(Q_n))$, is defined as the logarithm of the smallest number $N$ for which there exists a collection of functions $f_1, \ldots, f_N$ such that for every $f \in \mathcal{F}$ there exists a $j = j(f) \in \{1, \ldots, N\}$ with the property
\begin{align*}
\left\{\int_{\mathcal{X}} |f(\mathbf{x}) - f_j(\mathbf{x})|^2 d Q_n(\mathbf{x})  \right\}^{1/2} \leq \delta.
\end{align*}
Similarly, we define the $\delta$-entropy with respect to the supremum norm, $H_{\infty}(\delta, \mathcal{F})$, as the logarithm of the smallest $N$ for which there exists $f_1, \ldots, f_N$ such that for every $f \in \mathcal{F}$ there is a $j = j(f)$ with the property
\begin{align*}
\sup_{\mathbf{x} \in \mathcal{X}} |f(\mathbf{x}) - f_j(\mathbf{x})| \leq \delta.
\end{align*}
It is clear that $H(\delta, \mathcal{F}, \mathcal{L}^2(Q_n)) \leq H_{\infty}(\delta, \mathcal{F})$ for all $\delta>0$. To lighten the notation in all our proofs below we will use $c_0$ to denote generic constants. Thus, the value of $c_0$ may change from appearance to appearance.

\begin{lemma}
\label{lem:1}
Assume A3 and A4 and that $2m>d$. Then, there exists a universal constant $c_0$, independent of $n$, such that
\begin{align*}
H(\delta, \{ f \in \mathcal{H}^{m}(\mathbb{R}^d),\|f \|_{\mathcal{L}^2(Q_n)} \leq 1, I_m(f) \leq M \}, \mathcal{L}^2(Q_n)  \} \leq c_0 \left( \frac{M}{\delta} \right)^{d/m}, \quad \delta>0, \ M\geq 1.
\end{align*}
\end{lemma}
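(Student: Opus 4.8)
The plan is to bound the supremum-norm entropy $H_\infty(\delta, \mathcal{F}_M)$ of the class $\mathcal{F}_M = \{ f \in \mathcal{H}^m(\mathbb{R}^d): \|f\|_{\mathcal{L}^2(Q_n)} \le 1,\ I_m(f) \le M \}$, restricted to a fixed bounded domain $\mathcal{O}$ containing all the $\mathbf{x}_i$, and then invoke $H(\delta,\mathcal{F}_M,\mathcal{L}^2(Q_n)) \le H_\infty(\delta,\mathcal{F}_M)$. The standard machinery here is the entropy bound for Sobolev balls: by results of the type found in \citet[Chapter 2]{Adams:2003} or the entropy estimates in \citet{G:2010} and \citet{van de Geer:2000}, the unit ball of $\mathcal{H}^m(\mathcal{O})$ (with respect to the full Sobolev norm) is relatively compact in $\mathcal{C}(\overline{\mathcal{O}})$ with metric entropy $O(\varepsilon^{-d/m})$. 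So the real work is to show that membership in $\mathcal{F}_M$ forces control of the \emph{full} Sobolev norm $\|f\|_{\mathcal{H}^m(\mathcal{O})}$, not merely the seminorm $I_m(f)$ and the empirical $\mathcal{L}^2(Q_n)$-norm.

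The key step is therefore an equivalence-of-norms argument on $\mathcal{O}$. I would show that there is a constant $c_0$, depending only on $\mathcal{O}$ and the constants $B_1, B_2, R$ from A3--A4, such that for all $f \in \mathcal{H}^m(\mathbb{R}^d)$,
\begin{align*}
\|f\|_{\mathcal{H}^m(\mathcal{O})} \le c_0 \left( I_m(f) + \|f\|_{\mathcal{L}^2(Q_n)} \right).
\end{align*}
The usual route is a compactness/Poincar\'e-type argument: $I_m$ is a seminorm whose null space is the $M$-dimensional space $\Pi$ of polynomials of degree $< m$, and on a domain satisfying the uniform cone condition the Sobolev norm is equivalent to $I_m(\cdot)$ plus any norm on $\Pi$ (this is a standard consequence of the Sobolev embedding and a Rellich--Kondrachov compactness argument, or the Bramble--Hilbert lemma). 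It then remains to bound the polynomial part: writing $f = p + g$ with $p \in \Pi$ and $g$ its $I_m$-orthogonal complement, one controls $\|g\|_{\mathcal{H}^m(\mathcal{O})}$ by $c_0 I_m(g) = c_0 I_m(f)$, and one controls the coefficients of $p$ via the values of $f$ at the $\mathbf{x}_i$: because the design is dense with $h_{\max,n}/h_{\min,n}$ bounded (A4), a fixed finite number of the $\mathbf{x}_i$ form a ``unisolvent'' configuration for $\Pi$ with a conditioning constant uniformly bounded in $n$, so $\sum_j |p(\mathbf{x}_{i_j})|^2 \gtrsim \|p\|^2$ for a suitable selection, and $|p(\mathbf{x}_{i_j})| \le |f(\mathbf{x}_{i_j})| + |g(\mathbf{x}_{i_j})| \le \sqrt{n}\,\|f\|_{\mathcal{L}^2(Q_n)} + c_0 I_m(f)$ by the Sobolev embedding applied to $g$. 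Some care is needed so that no spurious factor of $n$ survives; this is exactly where A4 and the cone condition A3 enter, and I expect this to be the main obstacle -- obtaining the norm equivalence with a constant that does \emph{not} depend on $n$ despite $Q_n$ changing with $n$.

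Once the norm equivalence is in hand, the argument is immediate: $f \in \mathcal{F}_M$ implies $\|f\|_{\mathcal{H}^m(\mathcal{O})} \le c_0(M + 1) \le 2 c_0 M$ since $M \ge 1$, so $\mathcal{F}_M$ is contained in a ball of radius $2c_0 M$ in $\mathcal{H}^m(\mathcal{O})$. Rescaling by $M$ and applying the classical Sobolev-ball entropy estimate gives
\begin{align*}
H_\infty(\delta, \mathcal{F}_M) \le H_\infty\!\left( \frac{\delta}{2 c_0 M},\ \{ g \in \mathcal{H}^m(\mathcal{O}): \|g\|_{\mathcal{H}^m(\mathcal{O})} \le 1 \} \right) \le c_0 \left( \frac{M}{\delta} \right)^{d/m},
\end{align*}
valid for $2m > d$ so that the embedding into $\mathcal{C}(\overline{\mathcal{O}})$ holds and the entropy exponent is finite. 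Combining with $H(\delta,\cdot,\mathcal{L}^2(Q_n)) \le H_\infty(\delta,\cdot)$ finishes the proof. I would also remark, for the comparison with Lemma~20.6 of \citet{G:2010}, that the improvement over their bound lies precisely in the sharp $n$-free constant in the norm equivalence, which removes the logarithmic factor they incur.
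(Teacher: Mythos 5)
Your proposal follows essentially the same route as the paper: reduce the $\mathcal{L}^2(Q_n)$-entropy to sup-norm entropy on $\mathcal{O}$, show that $\|f\|_{\mathcal{L}^2(Q_n)}\le 1$ and $I_m(f)\le M$ together force $\|f\|_{\mathcal{H}^m(\mathcal{O})}\le c_0M$, and then invoke the classical entropy bound for Sobolev balls (the paper uses Proposition 6 of \citet{Cucker:2001}). The only divergence is at the step you yourself flag as the main obstacle: the paper does not prove the norm equivalence from scratch but obtains it directly from the sampling inequality of \citet[Theorem 3.4]{Utr:1988}, which under A3--A4 gives $\int_{\mathcal{O}}|f|^2\,d\mathbf{x}\le c_0\int_{\mathcal{O}}|f|^2\,dQ_n+c_0\,I_m^2(f)$ with $c_0$ independent of $n$ (the factor $h_{\max,n}^{2m}$ appearing in Utreras' bound is absorbed since $h_{\max,n}\le B_1$). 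Your sketched derivation via a unisolvent subconfiguration is the right idea in spirit, but as written the bound $|f(\mathbf{x}_{i_j})|\le\sqrt{n}\,\|f\|_{\mathcal{L}^2(Q_n)}$ is exactly the spurious $\sqrt{n}$ you worry about and would not close the argument; the fix is to average over the many design points in each cell of a quasi-uniform partition (which is what Utreras' proof does, using A4 to keep the cells comparable), or simply to cite his theorem as the paper does. Your closing remark about the source of the improvement over Lemma 20.6 of \citet{G:2010} is slightly off: the gain comes from working with the sup-norm entropy of the full Sobolev ball via the sharp Birman--Solomyak/Cucker--Smale rate $(M/\delta)^{d/m}$ rather than from the constant in the norm equivalence.
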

\begin{proof}
Observe that by A3, $Q_n$ is restricted to the open set $\mathcal{O}$ and for any $A \in \mathcal{B}(\mathbbm{R}^d)$ we have $Q_n(A) = Q_n(A \cap \mathcal{O})$. Therefore, for any $(f,g) \in \mathcal{H}^{m}(\mathbb{R}^d) \times \mathcal{H}^{m}(\mathbb{R}^d) $ we find
\begin{align*}
\int_{\mathbb{R}^d}|f(\mathbf{x})-g(\mathbf{x})|^2 dQ_n(\mathbf{x}) & = \int_{\mathcal{O}}|f(\mathbf{x})-g(\mathbf{x})|^2 dQ_n(\mathbf{x})
\\  & \leq \sup_{\mathbf{x} \in \mathcal{O}}|f(\mathbf{x}) - g(\mathbf{x})|^2.
\end{align*}
Hence, to bound $H(\delta, \mathcal{F}, \mathcal{L}^2(Q_n)  \}$ it suffices to obtain a bound on
\begin{align}
\label{eq:A1}
H_{\infty}(\delta,\{ f \in \mathcal{H}^{m}(\mathcal{O}),\|f \|_{\mathcal{L}^2(Q_n)} \leq 1, I_m(f) \leq M \}), \quad \delta>0, \ M\geq 1.
\end{align}

To bound \eqref{eq:A1} we start by showing the following inclusion
\begin{align*}
\{ f \in \mathcal{H}^{m}(\mathcal{O}),\|f \|_{\mathcal{L}^2(Q_n)} \leq 1, I_m(f) \leq M \}  \subset \{ f \in \mathcal{H}^{m}(\mathcal{O}): \|f\|_{\mathcal{H}^{m}(\mathcal{O})}\leq c_0 M\}
\end{align*}
for some $c_0>0$ and all $M \geq 1$, where $\| \cdot \|_{\mathcal{H}^{m}(\mathcal{O})}$ is the Sobolev norm given by
\begin{align*}
\|f\|_{\mathcal{H}^{m}(\mathcal{O})} = \left\{ \int_{\mathcal{O}} |f(\mathbf{x})|^2 d \mathbf{x} + \sum_{m_1 + \ldots + m_d = m} \binom{m}{m_1, \ldots, m_d} \int_{\mathcal{O}} \left| \frac{\partial^m f(\mathbf{x})}{\partial x_{1}^{m_1} \ldots \partial x_{d}^{m_d}}  \right|^2 d \mathbf{x}  \right\}^{1/2}.
\end{align*}
To establish this inclusion, take an $f \in \mathcal{H}^{m}(\mathcal{O})$ such that $\| f \|_{\mathcal{L}^2(Q_n)} \leq 1$ and $I_m(f) \leq M$ for some $M \geq 1$. Our assumptions imply those of Theorem 3.4 in \citet{Utr:1988}, hence an application of that theorem reveals the existence of a constant $c_0$ such that
\begin{align*}
\int_{\mathcal{O}} |f(\mathbf{x})|^2 d \mathbf{x} & \leq c_0 \int_{\mathcal{O}} |f(\mathbf{x})|^2 d Q_n(\mathbf{x}) \\ & \quad +  c_0\sum_{m_1 + \ldots + m_d = m} \binom{m}{m_1, \ldots, m_d} \int_{\mathcal{O}} \left| \frac{\partial^m f(\mathbf{x})}{\partial x_{1}^{m_1} \ldots \partial x_{d}^{m_d}}  \right|^2 d \mathbf{x} 
\\ & \leq c_0 \left\{ 1  + I_m^2(f) \right\}
\\ & \leq c_0 M^2,
\end{align*}
where the last inequality follows from our assumption that $M \geq 1$. With this bound we now obtain
\begin{align*}
\|f\|_{\mathcal{H}^{m}(\mathcal{O})} \leq \left\{c_0 M^2 + I_m^2(f) \right\}^{1/2}  \leq c_0 M,
\end{align*}
as claimed. 

The final step of the proof is a bound on 
\begin{align*}
H_{\infty}\left(\delta, \{f \in \mathcal{H}^{m}(\mathcal{O}): \|f\|_{\mathcal{H}^{m}(\mathcal{O})}\leq c_0 M \} \right), \ \delta>0, M \geq 1.
\end{align*}
But this is the entropy of the closed $c_0 M$-ball and Proposition 6 of \citet{Cucker:2001} implies the existence of a universal $c_0$ such that
\begin{align*}
H_{\infty}\left(\delta, \{f \in \mathcal{H}^{m}(\mathcal{O}): \|f\|_{\mathcal{H}^{m}(\mathcal{O})}\leq c_0 M \} \right) \leq c_0 \left(\frac{M}{\delta} \right)^{d/m}, \ \delta>0, M \geq 1.
\end{align*}
The proof is complete.

\end{proof}

\begin{proof}[Proof of Theorem~\ref{thm:1}] 

The proof of the theorem employs the convexity argument of \citet{van de Geer:2002} combined with the Sobolev embedding theorem in order to localize the behaviour of the objective function around $f_0$. A tight bound on the asymptotic variance is established with the help of an exponential inequality based on the improved entropy estimates obtained in Lemma~\ref{lem:1}.

Write $L_n(f) = M_n(f) + \lambda I_m^2(f)$ where $M_n(f) = n^{-1} \sum_{i=1}^n \rho(Y_i-f(\mathbf{x}_i))$. Observe that $L_n(f)$ is the sum of two convex functions and as such it is itself convex. By Proposition~\ref{prop:1} there exists a minimizer of $L_n(f)$ in $\mathcal{H}^{m}(\mathbb{R}^d)$, which we denote with $\widehat{f}_n$. Put
\begin{align*}
\widetilde{f}_n = \alpha \widehat{f}_n + (1-\alpha)f_0,
\end{align*}
for some $\alpha \in (0,1)$ to be chosen. As $f_0 \in \mathcal{H}^{m}(\mathbb{R}^d)$ we have
\begin{align*}
L_n(\widetilde{f}_n) \leq \alpha L_n(\widehat{f}_n) + (1-\alpha)L_n(f_0) \leq L_n(f_0),
\end{align*}
from where, after adding $\mathbb{E}\{M_n(\widetilde{f}_n) -M_n(f_0) \} $ on both sides, we get
\begin{align}
\label{eq:A2}
\mathbb{E}\{M_n(\widetilde{f}_n) -M_n(f_0) \}  +  I_m^2(\widetilde{f}_n) \leq \left[ M_n(f_0) - \mathbb{E}\{M_n(f_0)\} -  M_n(\widetilde{f}_n) + \mathbb{E}\{M_n(\widetilde{f}_n) \}  \right] + I_m^2(f_0).
\end{align}
We choose $\alpha = 1/(1+\|\widehat{f}_n-f_0 \|_{\mathcal{L}^2(Q_n)})$. Clearly, $\alpha \in (0,1)$ and 
\begin{align*}
\|\widetilde{f}_n - f_0 \|_{\mathcal{L}^2(Q_n)}  = \alpha\|\widehat{f}_n - f_0 \|_{\mathcal{L}^2(Q_n)} \leq 1.
\end{align*} 
Our proof consists of deriving a lower bound on the left-hand side of \eqref{eq:A2} and an upper bound on the right-hand side of \eqref{eq:A2}, both in terms of $\|\widetilde{f}_n-f_0\|_{\mathcal{L}^2(Q_n)}$. 

We begin with the lower bound. By assumption $2m>d$ and $\mathcal{O}$ is a bounded set satisfying the uniform cone condition. Therefore, by Sobolev's theorem \citep[Theorem 4.12]{Adams:2003} we have the (compact) embedding
\begin{align*}
\mathcal{H}^{m}(\mathcal{O}) \to \mathcal{C}(\mathcal{O}),
\end{align*}
which implies the existence of a universal $c_0>0$ such that for any $f \in \mathcal{H}^{m}(\mathcal{O}) \subset \mathcal{C}(\mathcal{O})$,
\begin{align*}
\sup_{\mathbf{x} \in \mathcal{O}} |f(\mathbf{x})| \leq c_0 \left\{ \int_{\mathcal{O}} |f(\mathbf{x})|^2 d \mathbf{x} +   I_m^2(f) \right\}^{1/2}.
\end{align*}
Approximating $\int_{\mathcal{O}} |f(\mathbf{x})|^2 d \mathbf{x}$ with $\int_{\mathcal{O}} |f(\mathbf{x})|^2 d Q_n(\mathbf{x})$, as in the proof of Lemma~\ref{lem:1}, yields
\begin{align*}
\sup_{\mathbf{x} \in \mathcal{O}} |f(\mathbf{x})| \leq c_0 \left\{ \int_{\mathcal{O}} |f(\mathbf{x})|^2 d Q_n(\mathbf{x}) +   I_m^2(f) \right\}^{1/2}.
\end{align*}
Now, since $I_m^2$ is a squared semi-norm and $I_m^2(f_0)$ is bounded, $I_m^2(f_0) \leq 1$, say, this inequality reveals that for all $f \in \mathcal{H}^m(\mathbbm{R}^d)$ satisfying $\|f-f_0\|_{\mathcal{L}^2(Q_n)} \leq 1$ we have
\begin{align*}
\sup_{\mathbf{x} \in\mathcal{O}}| f(\mathbf{x})-f_0(\mathbf{x})|\leq c_0 \left\{ 1 + I_m^2(f - f_0) \right\}^{1/2} \leq c_0 \{1+ I_m^2(f)\}^{1/2}.
\end{align*}
It follows that we can choose a large enough $D_{\kappa}>1$, not depending on $f$, such that 
\begin{align*}
\sup_{\mathbf{x} \in \mathcal{O}} \frac{| f(\mathbf{x})-f_0(\mathbf{x})|}{D_{\kappa}\{1+ I_m^2(f)\}^{1/2}} \leq \kappa,
\end{align*}
where $\kappa$ is the constant in assumption A2. Therefore, for all $f \in \mathcal{H}^m(\mathbbm{R}^d)$ satisfying $\|f-f_0\|_{\mathcal{L}^2(Q_n)} \leq 1$, by A2, we find
\begin{align*}
\mathbb{E}\{M_n(f) -M_n(f_0) \} & = \frac{1}{n} \sum_{i=1}^n \mathbb{E} \left\{ \rho\left(\epsilon_i+f_0(\mathbf{x}_i)-f(\mathbf{x}_i) \right) - \rho\left(\epsilon_i \right) \right\}
\\ &\geq  \frac{1}{n} \sum_{i=1}^n \mathbb{E} \left\{ \rho\left( \epsilon_i + \frac{f_0(\mathbf{x}_i)-f(\mathbf{x}_i)}{D_k\left\{1+I_m^2(f)\right\}^{1/2}}  \right) - \rho\left(\epsilon_i \right) \right\}
\\ & \geq \kappa \frac{\left\|f-f_0\right\|^2_{\mathcal{L}^2(Q_n)}}{D_k^2\left\{1+I_m^2(f) \right\} }
\end{align*}
It follows that 
\begin{align*}
\inf_{f \in \mathcal{H}^m(\mathbbm{R}^d): \|f-f_0\|_{\mathcal{L}^2(Q_n)} \leq 1} \left[ \frac{\mathbb{E}\{M_n(f) -M_n(f_0) \}}{\kappa \frac{\|f-f_0\|^2_{\mathcal{L}^2(Q_n)}}{D_{\kappa}^2 \{1+I_m^2(f) \}}} \right] \geq 1
\end{align*}
and from this, since, by construction, $\widetilde{f}_n \in \mathcal{H}^m(\mathbbm{R}^d)$ and $\|\widetilde{f}_n - f_0\|_{\mathcal{L}^2(Q_n)} \leq 1$, we see that 
\begin{align}
\label{eq:A3}
\mathbb{E}\{M_n(\widetilde{f}_n) -M_n(f_0) \} & = \frac{\mathbb{E}\{M_n(\widetilde{f}_n) -M_n(f_0) \}}{ \kappa \frac{\|\widetilde{f}_n-f_0\|^2_{\mathcal{L}^2(Q_n)}}{D_{\kappa}^2 \{1+I_m^2(\widetilde{f}_n) \}}} \kappa \frac{\|\widetilde{f}_n-f_0\|^2_{\mathcal{L}^2(Q_n)}}{D_{\kappa}^2 \{1+I_m^2(\widetilde{f}_n) \}} \nonumber
\\ & \geq \kappa \frac{\|\widetilde{f}_n-f_0\|^2_{\mathcal{L}^2(Q_n)}}{D_{\kappa}^2 \{1+I_m^2(\widetilde{f}_n) \}} \inf_{f \in \mathcal{H}^m(\mathbbm{R}^d): \|f-f_0\|_{\mathcal{L}^2(Q_n)} \leq 1} \left[ \frac{\mathbb{E}\{M_n(f) -M_n(f_0) \}}{\kappa \frac{\|f-f_0\|^2_{\mathcal{L}^2(Q_n)}}{D_{\kappa}^2 \{1+I_m^2(f) \}}} \right] \nonumber
\\ & \geq \kappa \frac{\|\widetilde{f}_n-f_0\|^2_{\mathcal{L}^2(Q_n)}}{D_{\kappa}^2 \{1+I_m^2(\widetilde{f}_n) \}}
\end{align}
This provides a lower bound for the left-hand side of \eqref{eq:A2} and completes the first part of our derivation.

Next, we derive an upper bound for the right-hand side of \eqref{eq:A2}. To accomplish this, we need to derive the modulus of continuity of the mean-centered process $M_n(f) - \mathbb{E}\{M_n(f)\}$. We will apply Lemma 8.5 of \citet{van de Geer:2000} to this process. First, observe that by A1 for all $(f, g) \in \mathcal{H}^{m}(\mathbbm{R}^d) \times \mathcal{H}^{m}(\mathbbm{R}^d)$ we have
\begin{align*}
| \rho(Y_i - f(\mathbf{x}_i)) -\rho(Y_i - g(\mathbf{x}_i))| \leq c_0 |f(\mathbf{x}_i) - g(\mathbf{x}_i)|,
\end{align*}
so that the lemma is applicable with $d_i(f,g) = |f(\mathbf{x}_i)-g(\mathbf{x}_i)|$ in the notation of \citet{van de Geer:2000}. In combination with Lemma~\ref{lem:1} we thus have
\begin{align*}
\sup_{f \in \mathcal{H}^m(\mathbbm{R}^d): \|f-f_0\|_{\mathcal{L}^2(Q_n)} \leq 1} \left| \frac{M_n(f_0) - M_n(f) - \mathbb{E}\{M_n(f_0) -  M_n(f) \}}{ n^{-1/2} \|f-f_0\|^{1-d/2m}_{\mathcal{L}^2(Q_n)}\{1+ I_m(f)\}^{d/2m}} \right| = O_{P}(1),
\end{align*}	
so that
\begin{align}
\label{eq:A4}
M_n(f_0) - M_n(\widetilde{f}_n) - \mathbb{E}\{M_n(f_0) -  M_n(\widetilde{f}_n) \} = O_P(n^{-1/2}) \|\widetilde{f}_n-f_0\|^{1-d/2m}_{\mathcal{L}^2(Q_n)}\{1+ I_m(\widetilde{f}_n)\}^{d/2m},
\end{align}
which provides the desired upper bound for the right-hand side of \eqref{eq:A2}.

Plugging the lower bound in \eqref{eq:A3} and the upper bound in \eqref{eq:A4} into \eqref{eq:A2}, we finally obtain
\begin{align}
\label{eq:A5}
c_0\frac{\|\widetilde{f}_n - f_0 \|^2_{\mathcal{L}^2(Q_n)}}{1+I_m^2(\widetilde{f}_n) } + \lambda I_m^2(\widetilde{f}_n) \leq O_P(n^{-1/2})\|\widetilde{f}_n-f_0\|^{1-d/2m}_{\mathcal{L}^2(Q_n)}\{1+ I_m(\widetilde{f}_n)\}^{d/2m} +  \lambda I_m^2(f_0).
\end{align} 
This inequality implies that
\begin{align}
\label{eq:A6}
\|\widetilde{f}_n-f_0\|^2_{\mathcal{L}^2(Q_n)}= O_P(n^{-2m/(2m+d)}) \quad \text{and} \quad I_m(\widetilde{f}_n) = O_P(1).
\end{align}
To see this implication, note that if for real numbers $a,b,c$ we have $a \leq b +c $ then either $a \leq 2b$ or $a \leq 2c$. Applying this on \eqref{eq:A5} leads to either
\begin{align}
\label{eq:A7}
c_0\frac{\|\widetilde{f}_n - f_0 \|^2_{\mathcal{L}^2(Q_n)}}{1+I_m^2(\widetilde{f}_n) } + \lambda I_m^2(\widetilde{f}_n) \leq O_P(n^{-1/2}) \|\widetilde{f}_n-f_0\|^{1-d/2m}_{\mathcal{L}^2(Q_n)}\{1+ I_m(\widetilde{f}_n)\}^{d/2m},
\end{align}
or 
\begin{align}
\label{eq:A8}
c_0\frac{\|\widetilde{f}_n - f_0 \|^2_{\mathcal{L}^2(Q_n)}}{1+I_m^2(\widetilde{f}_n) } + \lambda I_m^2(\widetilde{f}_n) \leq 2 \lambda I_m^2(f_0).
\end{align}
If \eqref{eq:A8} holds, \eqref{eq:A6} is easily verified. On the other hand, if \eqref{eq:A7} holds, solving it we get
\begin{align*}
\|\widetilde{f}_n - f_0 \|_{\mathcal{L}^2(Q_n)} = O_P(n^{-m/(2m+d)})\{1+I_m(\widetilde{f}_n)\}^{\frac{d}{2m+d}}\{1+I_m^2(\widetilde{f}_n)\}^{\frac{2m}{2m+d}},
\end{align*}	
as well as
\begin{align*}
\frac{I_m^2(\widetilde{f}_n)}{\left\{1+I_m(\widetilde{f}_n)\right\}^{\frac{2d}{2m+d}}\left\{1+I_m^2(\widetilde{f}_n) \right\}^{\frac{2m-d}{2m+d}}}= O_P(n^{-2m/(2m+d)}) \lambda^{-1}.
\end{align*}
By our assumptions on $\lambda$, the latter implies that $I_m(\widetilde{f}_n) = O_P(1)$. Hence, $\|\widetilde{f}_n - f_0 \|_{\mathcal{L}^2(Q_n)} = O_P(n^{-m/(2m+d)})$ verifying \eqref{eq:A6} again.

The last step in our proof involves passage from $\widetilde{f}_n$ to $\widehat{f}_n$.	For this, first note that by definition of the convex combination $\widetilde{f}_n$ and \eqref{eq:A6},
\begin{align*}
\|\widetilde{f}_n - f_0\|_{\mathcal{L}^2(Q_n)} = \frac{\|\widehat{f}_n - f_0\|_{\mathcal{L}^2(Q_n)}}{1+\|\widehat{f}_n - f_0\|_{\mathcal{L}^2(Q_n)}}= O_P(n^{-m/(2m+d)}),
\end{align*}
whence also $\|\widehat{f}_n - f_0\|_{\mathcal{L}^2(Q_n)} = O_P(n^{-m/(2m+d)})$. Finally, since again by \eqref{eq:A6} $I_m(\widetilde{f}_n)= O_P(1)$, by the triangle inequality we get
\begin{align*}
I_m(\alpha (\widehat{f}_n-f_0)) \leq I_m(\widetilde{f}_n) + I_m(f_0) = O_P(1).
\end{align*}
But then also $I_m( \widehat{f}_n-f_0)= O_P(1)$, which implies the result. The proof is complete.

\end{proof}

\begin{proof}[Proof of Corollary~\ref{cor:1}]

The first part of the Corollary follows from Theorem 3.4 of \citet{Utr:1988} which in our case reads 
\begin{align*}
\int_{\mathcal{O}}|\widehat{f}_n(\mathbf{x}) - f_0(\mathbf{x})|^2 d \mathbf{x} & \leq c_0 \int_{\mathcal{O}}|\widehat{f}_n(\mathbf{x}) - f_0(\mathbf{x})|^2 d Q_n(\mathbf{x}) \\ & \quad + c_0 h_{\max,n}^{2m} \sum_{m_1 + \ldots + m_d = m} \binom{m}{m_1, \ldots, m_d} \int_{\mathcal{O}} \left| \frac{\partial^m (\widehat{f}_n(\mathbf{x}) - f_0(\mathbf{x}))}{\partial x_{1}^{m_1} \ldots \partial x_{d}^{m_d}}  \right|^2 d \mathbf{x},
\end{align*}
for some constant $c_0$ that does not depend on either $\widehat{f}_n$ or $f_0$. Applying now Theorem~\ref{thm:1} and our assumption $h_{\max,n} = O(n^{-1/(2m+d)})$ yields
\begin{align*}
\int_{\mathcal{O}}|\widehat{f}_n(\mathbf{x}) - f_0(\mathbf{x})|^2 d \mathbf{x} & = O_P(n^{-2m/(2m+d)}) + c_0 h_{\max,n}^{2m} O_P(1)
\\ & = O_P(n^{-2m/(2m+d)}) + O_P(n^{-2m/(2m+d)})
\\ & = O_P(n^{-2m/(2m+d)}),
\end{align*}
as asserted.

For the second part of the Corollary we apply an interpolation inequality due to \citet{Nir:1959}, a simplified version of which states
\begin{align*}
\int_{\mathcal{O}} \left|\frac{\partial f^j(\mathbf{x})}{\partial x_1^{j_1} \ldots \partial x_d^{j_d} } \right|^2 d \mathbf{x} & \leq c_0\left\{ \int_{\mathcal{O}} \left|\frac{\partial f^m(\mathbf{x})}{\partial x_1^{m_1} \ldots \partial x_d^{m_d} } \right|^2 d \mathbf{x} \right\}^{j/m} 
 \left\{ \int_{\mathcal{O}} |f(\mathbf{x}) |^2 d \mathbf{x} \right\}^{1-j/m}
\\& \quad  +  c_0\int_{\mathcal{O}} |f(\mathbf{x}) |^2 d \mathbf{x},
\end{align*}
for every $f\in \mathcal{H}^{m}(\mathcal{O})$, where $c_0$ is a universal constant and the inequality holds for all tuples $(j_1, \ldots, j_d)$ and $(m_1, \ldots, m_d)$ such that $j_1+\ldots+ j_d = j$ and $m_1+\ldots+m_d = m$, respectively. Now apply this inequality with $f$ replaced by $\widehat{f}_n - f_0$ and use the first part of the corollary and Theorem~\ref{thm:1} to get
\begin{align*}
\int_{\mathcal{O}} \left| \frac{\partial \widehat{f}_n^j(\mathbf{x})}{\partial x_1^{j_1} \ldots \partial x_d^{j_d} } -  \frac{\partial f^j_0(\mathbf{x})}{\partial x_1^{j_1} \ldots \partial x_d^{j_d} } \right|^2 d \mathbf{x} & = O_P(1) O_P(n^{-2(m-j)/(2m+d)}) + O_P(n^{-2m/(2m+d)})
\\ & = O_P(n^{-2(m-j)/(2m+d)}),
\end{align*}
which completes the proof.
\end{proof}

\begin{proof}[Proof of Theorem~\ref{thm:2}]

For uniformly sub-Gaussian errors $\epsilon_i$ and under Lemma~\ref{lem:1}, the derivation on \citet[p. 168]{van de Geer:2000} shows that
\begin{align*}
\frac{1}{n} \sum_{i=1}^n \epsilon_i(\widehat{f}_n(\mathbf{x}_i) - f_0(\mathbf{x}_i)) = O_P(n^{-1/2})\|\widehat{f}_n- f_0 \|^{1-d/2m}_{\mathcal{L}^2(Q_n)} \{1+I_m(\widehat{f}_n) \}^{d/2m}.
\end{align*}
Plug this into \eqref{eq:4} to obtain
\begin{align*}
\|\widehat{f}_n - f_0 \|^2_{\mathcal{L}^2(Q_n)} + \lambda I_m^2(\widehat{f}_n) \leq O_P(n^{-1/2})\|\widehat{f}_n- f_0 \|^{1-d/2m}_{\mathcal{L}^2(Q_n)} \{1+I_m(\widehat{f}_n) \}^{d/2m} + \lambda I_m^2(f_0).
\end{align*}
Now argue as in the proof of Theorem~\ref{thm:1} to complete the proof.

\end{proof}

\section*{Acknowledgements}

The author is grateful to two anonymous referees, the associate editor and the editor for suggestions that lead to a much improved manuscript. He is also grateful to Stefan Van Aelst for helpful discussions. This research has been supported by the Research Foundation-Flanders (project 1221122N).

\end{document}